\newtheorem{theorem}{\hspace{0em}Theorem}
\newtheorem{proposition}{\hspace{0em}Proposition}
\newtheorem{remark}{\hspace{0em}Remark}
\newtheorem{example}{\hspace{0em}Example}
\begin{document}
%
% paper title
% Titles are generally capitalized except for words such as a, an, and, as,
% at, but, by, for, in, nor, of, on, or, the, to and up, which are usually
% not capitalized unless they are the first or last word of the title.
% Linebreaks \\ can be used within to get better formatting as desired.
% Do not put math or special symbols in the title.
\title{Fusion of labeled RFS densities with \\ minimum information loss}
%
%
% author names and IEEE memberships
% note positions of commas and nonbreaking spaces ( ~ ) LaTeX will not break
% a structure at a ~ so this keeps an author's name from being broken across
% two lines.
% use \thanks{} to gain access to the first footnote area
% a separate \thanks must be used for each paragraph as LaTeX2e's \thanks
% was not built to handle multiple paragraphs
%

\author{Lin~Gao, %~\IEEEmembership{Member,~IEEE,}
	Giorgio Battistelli, %~\IEEEmembership{Fellow,~OSA,}
	and~Luigi~Chisci
%	,~\IEEEmembership{Senior~Member,~IEEE}% <-this % stops a space
	\thanks{
		L. Gao, G. Battistelli and L. Chisci are with Dipartimento di Ingegneria dell'Informazione (DINFO),  Universit\`{a} degli Studi di Firenze, Italy.
		E-mail: \{lin.gao,giorgio.battistelli,luigi.chisci\}@unifi.it}}

% note the % following the last \IEEEmembership and also \thanks - 
% these prevent an unwanted space from occurring between the last author name
% and the end of the author line. i.e., if you had this:
% 
% \author{....lastname \thanks{...} \thanks{...} }
%                     ^------------^------------^----Do not want these spaces!
%
% a space would be appended to the last name and could cause every name on that
% line to be shifted left slightly. This is one of those "LaTeX things". For
% instance, "\textbf{A} \textbf{B}" will typeset as "A B" not "AB". To get
% "AB" then you have to do: "\textbf{A}\textbf{B}"
% \thanks is no different in this regard, so shield the last } of each \thanks
% that ends a line with a % and do not let a space in before the next \thanks.
% Spaces after \IEEEmembership other than the last one are OK (and needed) as
% you are supposed to have spaces between the names. For what it is worth,
% this is a minor point as most people would not even notice if the said evil
% space somehow managed to creep in.

% The paper headers
\markboth{Journal of \LaTeX\ Class Files,~Vol.~14, No.~8, August~2015}%
{Shell \MakeLowercase{\textit{et al.}}: Bare Demo of IEEEtran.cls for IEEE Journals}
% The only time the second header will appear is for the odd numbered pages
% after the title page when using the twoside option.
% 
% *** Note that you probably will NOT want to include the author's ***
% *** name in the headers of peer review papers.                   ***
% You can use \ifCLASSOPTIONpeerreview for conditional compilation here if
% you desire.

% If you want to put a publisher's ID mark on the page you can do it like
% this:
%\IEEEpubid{0000--0000/00\$00.00~\copyright~2015 IEEE}
% Remember, if you use this you must call \IEEEpubidadjcol in the second
% column for its text to clear the IEEEpubid mark.

% use for special paper notices
%\IEEEspecialpapernotice{(Invited Paper)}

% make the title area
\maketitle

% As a general rule, do not put math, special symbols or citations
% in the abstract or keywords.
\begin{abstract}
%	This paper addresses the problem of multitarget densities fusion (MDF).
	This paper addresses fusion of \textit{labeled random finite set} (LRFS) densities
	according to the criterion of \textit{minimum information loss} (MIL).
%	where multitarget state is modeled with labeled random finite set (LRFS).
%	Specifically,
%	the fusion of multiple local densities provided by local multitarget trackers of agents is considered,
%	with the principle that the fused density leads to minimum weighted information loss (MWIL).
	The MIL criterion amounts to minimizing the (weighted) sum of \textit{Kullback-Leibler divergences} (KLDs)
	with the fused density appearing as righthand argument of the KLDs.
	In order to ensure the fused density to be consistent with the local ones
	when LRFS densities are \textit{marginal $\delta$-generalized labeled multi-Bernoulli} (M$\delta$-GLMB) or \textit{labeled multi-Bernoulli} (LMB) densities,
	the MIL rule is further elaborated by imposing the constraint 
	that the fused density be in the same family of local ones.
	In order to deal with different \textit{fields-of-view} (FoVs) of the local densities,
	the global label space is divided into disjoint subspaces
	which represent the exclusive FoVs and the common FoV of the agents,
	and each local density is decomposed into the sub-densities defined in the corresponding subspaces. 
	%and therefore, independently to each other,
	%by minimizing the KLD from the re-constructed local density (by multiplying sub-densities) to the original one.
	Then fusion is performed subspace-by-subspace to combine local sub-densities into global ones,
	and the global density is obtained by multiplying the global sub-densities.
	Further, in order to tackle the label mismatching issue
	arising in practical applications,
	a \textit{rank assignment optimization} (RAO) of a suitably defined cost is carried out so as to match labels
	from different agents.
%	performed in which 
%	the cost is defined as the information-theoretic divergence between the spatial PDFs among tracks.
%        As a result, the labels are matched by solving the defined RAO.
	Moreover, issues concerning implementation of the MIL rule and
	its application to \textit{distributed multitarget tracking} (DMT) are discussed.
	Finally, the performance of the proposed fusion approach is assessed via simulation experiments
	considering DMT with either the same or different FoVs of the agents.
\end{abstract}

% Note that keywords are not normally used for peerreview papers.
\begin{IEEEkeywords}
Distributed multitarget tracking, 
Kullback-Leibler divergence,
random finite set, 
data fusion,
linear opinion pool
\end{IEEEkeywords}

% For peer review papers, you can put extra information on the cover
% page as needed:
% \ifCLASSOPTIONpeerreview
% \begin{center} \bfseries EDICS Category: 3-BBND \end{center}
% \fi
%
% For peerreview papers, this IEEEtran command inserts a page break and
% creates the second title. It will be ignored for other modes.
\IEEEpeerreviewmaketitle

\section{Introduction}

\IEEEPARstart{O}{riginating} from \cite{mahler2000optimal},
\textit{generalized covariance intersection} (GCI) has become the most commonly adopted method for 
the fusion of multi-object densities.
As well known, GCI fusion amounts to computing the geometric mean of the  local densities \cite{bailey2012conservative}
and is consistent with the \textit{logarithmic opinion pool} (LogOP) \cite{genest1986combining},
which aims to aggregate information from multiple \textit{probability density functions} (PDFs).
Relying on the GCI approach,
several algorithms have been developed for fusing different types of \textit{random finite set} (RFS) processes \cite{uney2013distributed,battistelli2013consensus,wang2017distributed,fantacci2018robust,li2018robust,li2019computationally}.
It has been shown that,
based on the \textit{principle of minimum discrimination of information} (PMDI),
the GCI-fused density 
is the one that minimizes the weighted sum of \textit{Kullback-Leibler divergences}  (KLDs)  from the local densities to itself \cite{battistelli2013consensus,battistelli2014kullback}
and, from an information-theoretic viewpoint, can  be interpreted as the one that leads to 
\textit{minimum information gain} (MIG) \cite{gao2019multiobject,shore1980axiomatic}.

Besides GCI fusion,
it is possible to exploit the dual fusion rule that leads to \textit{minimum information loss} (MIL) \cite{gao2019multiobject,roy1982minimizing}.
Such fusion rule can be obtained also based on the idea of PMDI,
where the fused density is defined as the one minimizing the weighted sum of KLDs 
from itself to the local densities.
It has been shown in \cite{abbas2009kullback} that the fused density according to MIL  
turns out to be the weighted arithmetic mean of the local densities,
which is consistent with the \textit{linear opinion pool} (LOP) \cite{genest1986combining}.
%and is has been named the arithmetic mean (AM) in some of previous literature \cite{bailey2012conservative,li2018partial,li2019local}.
However, the MIL rule cannot be directly applied to fuse the majority of RFS densities due to lack of closeness, i.e.
the resulting fused RFS density does not in general belong to the same family of the local ones.
This prevents, for instance, its direct use in \textit{distributed multitarget tracking} (DMT) \cite{blair2000multitarget}
wherein the fused density at a given time serves as prior information for the next recursion.
%recursively performed applications, e.g. distributed multitarget tracking (DMT) \cite{blair2000multitarget}.
%next iteration of local filtering.
In order to overcome such difficulties,
it is proposed to approximate the fused RFS with a \textit{multi-object Poisson process} (MPP)
matching the first-order statistical moment,
which results into the so-called \textit{arithmetic fusion} \cite{li2018partial,li2019local}.
It has been shown in \cite{gostar2017cauchy} that
such approximation turns out to be the one that minimizes the average \textit{Cauchy-Schwarz divergence} (CSD) \cite{hoang2015cauchy}.
However, all the methods of \cite{gostar2017cauchy,li2018partial,li2019local} can only be applied to the case where local densities are MPP.
In \cite{gao2019multiobject},
by further exploiting the MIL paradigm,
a constraint that 
the fused density must be within the same family of the local ones
is imposed to the PMDI,
so that the ``best", in the sense of MIL, density within the considered family is obtained,
and such result can be applied to general multi-object processes (i.e., i.i.d. cluster processes).

It has been shown that both GCI and MIL fusion rules are conservative and immune to the problem of double counting of information \cite{bailey2012conservative,battistelli2015distributed}.
Moreover, both of them have their respective advantages and disadvantages.
The GCI rule has been proved to guarantee stability in terms of mean-square boundnedess of the estimation error
in the context of distributed state estimation (i.e. distributed Kalman filtering) \cite{battistelli2014kullback,battistelli2015consensus,battistelli2016stability}.
However, the GCI rule suffers from cardinality inconsistency in the context of multi-object density fusion \cite{uney2019fusion},
and is sensitive to misdetections.
Conversely, the MIL rule has satisfactory performance in terms of cardinality estimation,
while its performance deteriorates with higher false alarm rates \cite{gao2019multiobject}.
To summarize, 
in the context of DMT
it is more suitable to adopt MIL fusion whenever the detection probability is low,
while GCI fusion is preferable whenever dense clutter is present in the \textit{area of interest} (AoI).

In this paper,
the primary concern is in the extension of MIL multi-object fusion
to  \textit{labeled RFS} (LRFS) densities.
The main advantage of modeling the multi-object state as LRFS is that
the trajectory of each object can be obtained directly,
while additional track management procedures  \cite{panta2009data} are needed to extract object trajectories from unlabelled RFS densities.
It has been shown in \cite{papi2015generalized} that
a general LRFS density can be factored into the product of the
\textit{joint existence probability} (JEP) of the multi-object label set
by the corresponding \textit{conditional joint PDF} (CJPDF).
%i.e. the spatial distribution conditional on label set.
Based on such representation,
it is shown in this paper that the fusion of general LRFS densities 
(defined on the same label space)
adopting the MIL rule yields another general LRFS density,
and such result can be directly applied to fuse multiple $\delta$-GLMB densities \cite{vo2013labeled,vo2014labeled}.
However,
when the local LRFSs are modeled as  M$\delta$-GLMB \cite{fantacci2015marginalized} 
or LMB \cite{reuter2014labeled} processes,
the resulting fused LRFS density is not of the same type of the local ones.
Then the idea of \cite{gao2019multiobject},
where MIL optimization is restricted to the considered specific class of  local RFS densities,  
is exploited;
specifically, the ``best", in the MIL sense, fused M$\delta$-GLMB/LMB  of local M$\delta$-GLMB/LMB  densities is found.  

In practice,
due to the limitation of sensor range, it turns out that
the multi-object densities to be fused carry information on
different \textit{fields-of-view} (FoVs),
thus implying another challenge of multi-object fusion.
In such situation, if the GCI fusion is directly applied, 
due to its multiplicative nature,
the fused density tends to become null outside the common FoV.
In this way,
the non-common (exclusive) information carried by local densities is lost. 
Such a problem can be alleviated by taking remedies on the GCI method.
Specific remedies are the following.
\begin{itemize}
	\item[-] When multi-object densities are modeled as MPP with \textit{Gaussian mixture} (GM) representation\cite{vo2006gaussian},
	a uniform initialization of the \textit{probability hypothesis density} (PHD) for local MPPs can be employed so as to avoid the null-PHD problem \cite{battistelli2017random}.
	It is also possible to first disengage the Gaussian components (GCs) outside the common FoV
	with component matching algorithms (e.g. clustering algorithm),
	and then separately perform fusion on the GCs inside and outside the common FoV 
	with different strategies \cite{vasic2016system,li2019distributed}.
	\item[-] When the LMB RFS \cite{reuter2014labeled} is employed to model the multi-object state,
	a promising strategy is to associate to each \textit{Bernoulli component} (BC) of each local LMB density a specific fusion weight 
	based on the amount of information it carried,
	and then the fusion weights of BCs
	which have not been updated by measurements are automatically decreased,
	thus reducing their effect on the fusion process \cite{wang2018centralized}.
	Moreover, motivated from the uniform initialization strategy in \cite{battistelli2017random}, it is also possible to adopt a density compensation strategy, 
	where the local posterior of each agent undergoes an auxiliary birth process outside its local FoV.
	As  a result, the problem of miss-detections outside the local FoV of each agent can be alleviated \cite{li2018multi}.
\end{itemize}

Unlike GCI fusion which essentially performs ``intersection" among the agent FoVs,
the MIL rule has the potential to correctly fuse multi-object densities defined in different FoVs \cite{gostar2017cauchy,gao2019multiobject}.
However, since each local density has only the information within its own FoV,
the JEPs of all label subsets  that include targets outside the FoV
are always zero.
If the MIL rule is directly applied to all the local densities,
the JEP of the global density that includes all the targets spread over the whole surveillance area 
will certainly become null,
which means that it is not possible to jointly detect all the  existing targets spread over the whole surveillance area.

In this paper,
we propose to handle fusion of multi-object densities with different FoVs by applying
the MIL rule to mutually disjoint label subspaces,
where the label subspaces are obtained by evaluating the exclusive and common FoVs of the agents.
The sub-densities,
which are defined on different label subspaces,
are found by minimizing the KLD 
from the re-constructed local density (equal to the product of sub-densities) 
and the corresponding original local density.
By combination of the MIL rule and the decomposition strategy,
the problem of fusing local LRFS densities defined on different FoVs can be handled.
The advantage of the proposed algorithm is that
it needs neither aforehand initialization of multi-object densities over the global FoV 
nor density compensations,
so that it can be implemented in a more efficient way.

It should be noted that
the proposed MIL fusion rule for multiple LRFS densities is based on the pre-condition
that all the involved LRFS densities are defined on the same label space.
In practice, however,
%since each agent works independently to each other in the context of DMT,
it is extremely difficult to ensure such assumption
due to the fact that the local LRFS densities are propagated independently,
thus resulting into the \textit{label mismatching} (LM) problem \cite{li2019computationally}.
This difficulty can be overcome by setting up associations among labels of different LRFS densities.
The existing strategy \cite{li2019computationally} exploits rank assignment  to find the associations,
in which each label of the LRFS density with smaller cardinality of the label space 
will always be associated to a label of another LRFS density.
Such strategy works well when all agents have the same FoV,
nevertheless, whenever agents have different FoVs,
it is also possible that some label of an LRFS density remains unassociated,
thus the method in \cite{li2019computationally} is not suitable.
In this paper, 
we propose to solve the LM problem with different FoVs
also by means of the rank assignment problem,
where the cost is defined by exploiting an information-theoretic divergence between BCs.
In the proposed strategy,
the cost that a BC remains unassociated is also defined 
(which actually represents an upper bound on the divergence between associated BCs in different local LRFS densities),
thus the BCs outside the common FoV can be properly found.
%Though such strategy is heuristic,
%it indeed solves the LM problem in the context of DMT with different FoVs.

To summarize, 
this paper provides the following main contributions.
\begin{enumerate}[1)]
	\item A novel fusion rule that leads to MIL is proposed to fuse LRFS densities.
	\item In combination with a suitable label decomposition strategy, the MIL rule can be directly applied to handle  DMT when the agents have different FoVs.
	\item A strategy is proposed to solve the LM problem among LRFS densities, thus strengthening the applicability of the proposed algorithms to real scenarios.
\end{enumerate}
% \textcolor{blue}
% {
%The rest of this paper is organized as follows....
%}

\subsection*{Notation}
%This paper involves numerous of notations,
%n order to assist understanding the contents of this paper,
The notation used throughout the paper  is summarized hereafter.
First, we denote the agent set of a multi-agent system (MAS) as $\cal N$,
which consists of $|{\cal N}|$ agents.
Next, all the quantities related to LRFSs will be denoted with  boldface symbols.
Specifically, we use ${\bf X}$ to denote an LRFS and
${\bf x}$ for the augmented (labelled) single-object state.
Further, ${\boldsymbol \pi}$ represents a generic LRFS density,
${\boldsymbol \pi}_{\delta}$ a $\delta$-GLMB density,
${\boldsymbol \pi}_M$ an M$\delta$-GLMB density,
and ${\boldsymbol \pi}_{\beta}$ an LMB density.
Moreover,
a superscript is used to refer to a specific agent, i.e., ${\boldsymbol \pi}^i$ indicates the local density of agent $i \in {\cal N}$.
Conversely, subscripts of sets will be used to indicate their cardinality.
For instance,
${\bf X}_n$ and $L_n$ denote respectively an LRFS and label set with cardinality $n$.
We also define
$L_n\buildrel \Delta \over= \left\{ {{l_1}, \ldots ,{l_n}} \right\}$
and $X_n \buildrel \Delta \over= \left\{ {{x_1}, \ldots ,{x_n}} \right\}$.
For the sake of convenience, 
in the rest of this paper,
the symbols $L_n$, $X_n$ and their respective full definitions $\left\{ {{l_1}, \ldots ,{l_n}} \right\}$,
$\left\{ {{x_1}, \ldots ,{x_n}} \right\}$
will be interchangeably used.
All the involved  spaces will be denoted by blackboard bold symbols.
For instance, $\mathbb X$ denotes the state space,
and $\mathbb L$ the label space.
Further, we use subscripts with space symbols to refer to  subspaces,
e.g. $\mathbb L = \mathop  \uplus \nolimits_{m = 1}^M {\mathbb L_m}$,
where $\uplus$ denotes disjoint union (i.e. $\mathbb L_m \cap \mathbb L_{m'} = \emptyset$, for $m \ne m'$).
Conversely,
we use superscripts together with space symbols to refer to the label space of a local LRFS density,
i.e. $\mathbb L^i$ indicates the label space of ${\boldsymbol \pi}^i$, for agent $i\in {\cal N}$.
Finally, we define ${\cal F}_n(\mathbb L)$ as the set of all subsets of $\mathbb L$ with $n$ elements.

\section{Background}
\subsection{Labeled RFS}
In this paper, the multi-object state ${\bf X}_n = \{{\bf x}_1,\ldots,{\bf x}_n\}$ with  cardinality $n$
is modeled as an LRFS
in which the $k$-th ($k = 1,\ldots,n$) single-object state is denoted as ${\bf x}_{k} = \left( {x_k,l_k} \right) \in {{\mathbb X}} \times {\mathbb L}$,
${\mathbb X}$ denoting the kinematic state space
and ${\mathbb L}$ the label space. 
From a statistical viewpoint \cite{mahler2014advances},
an LRFS is completely characterized by its multi-object density ${\boldsymbol \pi}$.
%i.e. the multitarget state $\bf X$ can be extracted from ${\boldsymbol \pi}$.
%hence the problem of multitarget tracking can be re-casted to recursively compute ${\boldsymbol \pi}_t$ at every time $t$,
%and then the estimated multitarget state $\widehat {\bf X}_t$ is extracted from ${\boldsymbol \pi}_t$.
For a general LRFS density ${\boldsymbol \pi}$, 
its \textit{joint existence probability} (JEP) $p$ of label set $L_n\buildrel \Delta \over= \left\{ {{l_1}, \ldots ,{l_n}} \right\}$ is given by \cite{papi2015generalized}
\begin{align}
p\left(L_n\right)= \int { \ldots \int {{\bf{\pi }}\left( {\left\{ {\left( {{x_1},{l_1}} \right), \ldots ,\left( {{x_n},{l_n}} \right)} \right\}} \right)d{x_1} \cdots d{x_n}} }.    \label{eq:PMFL}
\end{align}
Then, it is straightforward to define the 
\textit{conditional joint probability density function} (CJPDF) $f$ of RFS $X_n \buildrel \Delta \over= \left\{ {{x_1}, \ldots ,{x_n}} \right\}$ 
given label set $L_n$ as \cite{papi2015generalized}
\begin{align}
f\left( {\left\{ {\left( {{x_1}|{l_1}} \right), \ldots ,\left( {{x_n}|{l_n}} \right)} \right\}} \right) \buildrel \Delta \over = \frac{{{\boldsymbol \pi}\left( {\left\{ {\left( {{x_1},{l_1}} \right), \ldots ,\left( {{x_n},{l_n}} \right)} \right\}} \right)}}{{p\left( L_n \right)}}.  \label{eq:CJPD}
\end{align}
It can be directly seen from the definition (\ref{eq:CJPD}) that
the CJPDF $f$ is permutation-invariant,
i.e.
\begin{align}
& f\left( {\left\{ {\left( {{x_1},{l_1}} \right), \ldots ,\left( {{x_n},{l_n}} \right)} \right\}} \right) = \nonumber \\
&\quad\quad\quad\quad f\left( {\left\{ {\left( {{x_{\sigma_n \left( 1 \right)}},{l_{\sigma_n \left( 1 \right)}}} \right), \ldots ,\left( {{x_{\sigma_n \left( n \right)}},{l_{\sigma_n \left( n \right)}}} \right)} \right\}} \right),   \label{eq:PI}
\end{align}
where $\sigma_n$ denotes any permutation on numbers $1,\ldots,n$, and $\sigma_n(i)$ its $i$-th element ($i=1,\ldots,n$).

For the sake of convenience, we introduce the shorthand notation $f\left( {\left. X_n \right|L_n} \right) \buildrel \Delta \over = f\left( {\left\{ {\left( {{x_1}|{l_1}} \right), \ldots ,\left( {{x_n}|{l_n}} \right)} \right\}} \right)$.
Equivalently, any LRFS density ${\boldsymbol \pi}$ can be generally expressed as
\begin{align}
{\boldsymbol \pi}\left( {\bf X}_n \right) = p\left( L_n \right) \cdot f\left( X_n|L_n \right).   \label{eq:LRFSN}
\end{align}

Hence,
any LRFS density can be completely specified by the JEP $p$ and CJPDF $f$ 
according to (\ref{eq:LRFSN}).
In particular,
\begin{itemize}
	\item[-] A {\bf $\boldsymbol \delta$-GLMB density} ${\boldsymbol \pi}_{\delta}=(p_{\delta},f_{\delta})$ is specified by \cite{vo2013labeled}
	\begin{align}
	p_{\delta}\left( L_n \right) &= \sum\limits_{\xi \in \Xi} {{w^\xi}\left( L_n \right)}, \label{eq:GLMBpmf} \\
	f_{\delta}\left( {\left. X_n \right|L_n} \right) &= \sum\limits_{\xi \in \Xi} {\frac{{{w^\xi}\left( L_n \right)}}{{\sum\limits_{\xi' \in \Xi} {{w^{\xi'}}\left( L_n \right)} }}\prod\limits_{k = 1}^n {f_{\left. {{l_k}} \right|L_n}^\xi\left( {{x_k}} \right)} }, \label{eq:GLMBspdf}
	\end{align}
	where: $\Xi$ is a discrete index set whose elements represent
	%which can be specified to the
	 track-to-measurement association hypotheses in the context of multitarget tracking with point measurements;
	${{w^\xi}\left( L_n \right)}$ denotes the JEP of $L_n$ under hypothesis $\xi$ 
	which satisfies $\sum\nolimits_{L \subseteq {\mathbb L}} {\sum\nolimits_{\xi \in \Xi} {{w^\xi}\left( L \right)} } = 1$;
	${f_{\left. {{l_k}} \right|L_n}^\xi}$ represents the  PDF of track $l_k$ 
	conditional on $L_n$ and hypothesis $\xi$;
	\item[-] An {\bf M$\boldsymbol \delta$-GLMB density} ${\boldsymbol \pi}_{M}=(p_{M},f_{M})$, 
	which is defined as $\delta$-GLMB density marginalized 
	by the discrete index set $\Xi$,
        is specified by  \cite{fantacci2015marginalized}
	\begin{align}
	p_{M}\left( L_n \right) &= w \left( L_n \right), \label{eq:MGLMBpmf}  \\
	f_{M}\left( {\left. X_n \right|L_n} \right) &= \prod\limits_{k = 1}^n {{f_{\left. {{l_k}} \right|L_n}}\left( {{x_k}} \right)},   \label{eq:MGLMBspdf}
	\end{align}
	where $w\left( L_n \right)$ denotes the JEP of label set $L_n$
	and ${f_{\left. {{l_k}} \right|L_n}}$ the PDF of track $l_k$ 
	conditional on label set $L_n$;
%	It can also be seen from (\ref{eq:MGLMBspdf}) and (\ref{eq:GLMBspdf}) 
	\item[-] An {\bf LMB density} ${\boldsymbol \pi}_{\beta}=(p_{\beta},f_{\beta})$
	is specified by \cite{reuter2014labeled}
	\begin{align}
	p_{\beta}\left( L_n \right) &= \prod\limits_{l \in {\mathbb L} } {\left( {1 - {r_l}} \right)} \prod\limits_{l' \in L_n} \frac{r_{l'}}{1 - r_{l'}}  \label{eq:LMBwl} ,  \\
	f_{\beta}\left( {\left. X_n \right|L_n} \right) &= \prod\limits_{l \in L_n} {{f_{l}}\left( x \right)}  \label{eq:LMBspdf} ,
	\end{align}
	where $r_l$ denotes the \textit{existence probability} (EP) of track with label $l$
	and $f_l$ the corresponding PDF.
\end{itemize}

\begin{remark}  \label{rem:LCILRFS}
	Besides the above mentioned definition as marginalization with respect to $\Xi$ of the $\delta$-GLMB density ${\boldsymbol \pi}_{\delta}$ \cite{fantacci2015marginalized},
	an M$\delta$-GLMB density can also be defined in a more general manner.
	As indicated by (\ref{eq:MGLMBpmf}) and (\ref{eq:MGLMBspdf}),
	an M$\delta$-GLMB density ${\boldsymbol \pi}_{M}=(p_{M},f_{M})$ can be re-defined as the LRFS density 
	given by (\ref{eq:LRFSN})
	with CJPDF $f_M$ independent of the PDF of each track conditionally on the track set. \\
%	Hence, in the remaining part of this paper, 
%	the M$\delta$-GLMB density is also referred to as the CJPD conditional independent LRFS (CCI-LRFS) density. 
%\textcolor{blue}{Luigi: Possibly remove since it is not very clear and not so important in the context of DMT.}
\end{remark}

%\begin{remark}  \label{rem:Com}
%	It is more accurate to model the multitarget state as a $\delta$-GLMB density than the M$\delta$-GLMB and LMB densities.
%	Besides,
%	$\delta$-GLMB density is a conjugate prior under point measurement models \cite{vo2013labeled},
%	while the Bayesian update of M$\delta$-GLMB/LMB density under point measurement models
%	will become $\delta$-GLMB, and approximations must be performed 
%	to convert the $\delta$-GLMB density back to M$\delta$-GLMB/LMB density
%	in order to provide the prior information of next iteration in single sensor multitarget tracking.
%	However, the number of hypotheses of $\delta$-GLMB density increases exponentially along with time 
%	if no additional operation (i.e. pruning of hypothesis) is enrolled,
%	which takes a huge amount of memory source as well as computational load,
%	thus is not engineering friendly.
%	In this regard, it is practically more desired to adopt M$\delta$-GLMB and LMB trackers 
%	for multitarget tracking.
%\end{remark}

\begin{remark}
	It can be seen from (\ref{eq:LMBwl}) and (\ref{eq:LMBspdf}) that,
	compared to an  M$\delta$-GLMB density,
	the JEP of an LMB density is further assumed to be independent of the EPs of the involved labels.
	Further, it can be concluded that
	the LMB density is also completely charactered by the existence probability (EP) $r_l$ and PDF $f_l$ of each track $l\in{\mathbb L}$.
	Hence, we also introduce the shorthand notation ${\boldsymbol \pi}_{\beta} = \{(r_l,f_l)\}_{l\in{\mathbb L}}$ for an LMB density.
\end{remark}

\subsection{Fusion with GCI}
%Denote the agent set of a \textit{multi-agent system} (MAS) as $\cal N$,
%which consists of $|{\cal N}|$ agents.
In this paper, it is assumed that
each agent $i \in {\cal N}$ has the ability to 
compute a local density ${\boldsymbol \pi}^i$ with measurements provided by sensors onboard
and also to transmit and receive data.
The goal of fusion amounts to
compute the global density $\overline{\boldsymbol \pi}$ that encapsulates all the information
provided by local ones ${\boldsymbol \pi}^i, i \in {\cal N}$.
%In this section, the existing fusion strategy is briefly introduced.
%Since MDF is supposed to be performed at every iteration,
%the time symbol is omitted hereafter.
So far, the most commonly adopted fusion strategy for LRFS densities is the so called
\textit{generalized covariance intersection} (GCI) \cite{mahler2000optimal} (also known as logarithmic opinion pool \cite{genest1986combining}) 
%has shown its effectiveness in fusion of LRFS densities, 
according to which the global posterior $\overline{\boldsymbol \pi}_{\rm GCI}$ is given by
\begin{align}
\overline{\boldsymbol \pi}_{\rm GCI} \left( {\bf{X}} \right) = \frac{{\prod\limits_{i \in {\cal N}} {{{\left[ {{{\boldsymbol \pi} ^i}\left( {\bf{X}} \right)} \right]}^{{\omega ^i}}}} }}{{\int {\prod\limits_{i \in {\cal N}} {{{\left[ {{{\boldsymbol \pi} ^i}\left( {\bf{X}} \right)} \right]}^{{\omega ^i}}}} \delta {\bf{X}}} }} ,  \label{eq:GCI}
\end{align}
where $\omega^i$ are suitable non-negative weights summing up to
unity,
and the involved integral is defined with respect to LRFSs, see \cite[Proposition 2]{vo2013labeled}. 
Based on such a fusion rule,
the global LRFS density can be explicitly computed when the multi-object state is modeled by either an M$\delta$-GLMB or LMB process.

Recently it has been pointed out that the fused density $ {\boldsymbol \pi}$ computed by the GCI rule
turns out to be the \textit{weighted Kullback-Leibler average} (wKLA) \cite{battistelli2014kullback,battistelli2015distributed} defined as follows
\begin{align}
\overline{\boldsymbol \pi}_{\rm GCI}  \buildrel \Delta \over = \arg \mathop {\min }\limits_{\boldsymbol \pi}  \sum\limits_{i \in {\cal N}} {{D_{\rm KL}}\left( {\left. {\boldsymbol \pi} \right\|{{\boldsymbol \pi} ^i}} \right)} ,  \label{eq:WKLA}
\end{align}
where ${D_{\rm KL}}\left( {\left. {{{\boldsymbol \pi}^1}} \right\|{{\boldsymbol \pi}^2}} \right)$ is the Kullback-Leibler divergence (KLD) from ${\boldsymbol \pi}^2$ to ${\boldsymbol \pi}^1$ defined as 
\begin{align}
{D_{\rm KL}}\left( {\left. {{{\boldsymbol \pi}^1}} \right\|{{\boldsymbol \pi}^2}} \right) \buildrel \Delta \over = \int {{{\boldsymbol \pi}^1}\left( {\bf X} \right)\log \frac{{{{\boldsymbol \pi}^1}\left( {\bf X} \right)}}{{{{\boldsymbol \pi}^2}\left( {\bf X} \right)}}\delta {\bf X}} . \label{eq:KLD}
\end{align}
From the viewpoint of information theory,
the KLD from ${\boldsymbol \pi}^2$ to ${\boldsymbol \pi}^1$ (i.e. $D_{\rm KL}\left( {\left. {{{\boldsymbol \pi}^1}} \right\|{{\boldsymbol \pi}^2}} \right)$) represents 
the information gain when ${\boldsymbol \pi}^2$ is replaced by ${\boldsymbol \pi}^1$
or, equivalently,
the information loss when ${\boldsymbol \pi}^1$ is replaced by ${\boldsymbol \pi}^2$ \cite{kullback1997information}.
Hence, the GCI rule (\ref{eq:GCI}) is actually the one that results into the \textit{minimum information gain} (MIG)
after fusion  \cite{gao2019multiobject}.

\section{Fusion of LRFS densities with MIL}  \label{sec:MWIL}

\subsection{MIL fusion of LRFS densities}
In this paper, we propose to fuse the local densities by adopting the criterion 
that the global density ${\boldsymbol \pi}$ leads to \textit{minimum information loss} (MIL).
Such fusion rule is defined as follows \cite{gao2019multiobject}
\begin{align}
\overline{\boldsymbol \pi}_{\rm MIL}  = \arg \mathop {\min }\limits_{\boldsymbol \pi}  \sum\limits_{i \in {\cal N}} {{\omega ^i}{D_{\rm KL}}\left( {\left. {{{\boldsymbol \pi} ^i}} \right\|{\boldsymbol \pi} } \right)} ,   \label{eq:MWIL}
\end{align}
whose difference with respect to the MIL criterion 
merely lies in the ordering of arguments, i.e.  local densities ${\boldsymbol \pi}^i$ and the global one ${\boldsymbol \pi}$, in the KLDs.
Since the main concern of this paper is the MIL fusion rule,
from now on we set $\overline{\boldsymbol \pi} \buildrel \Delta \over =\overline{\boldsymbol \pi}_{\rm MIL}$.
The resulting global density $\overline{\boldsymbol \pi}$ is given by
\begin{align}
\overline{\boldsymbol \pi} \left( {\bf{X}} \right) = \sum\limits_{i \in {\cal N}} {{\omega ^i}{{\boldsymbol \pi} ^i}\left( {\bf{X}} \right)} .   \label{eq:GP}
\end{align}

Compared to the GCI criterion, 
fusion with MIL has the advantage of faster detection of newly appeared targets,
while GCI has better performance in rejecting false alarms.
%In this paper,
%the performance of MWIL fusion under different FoVs of agents will be further investigated,
%where the multitarget state is modeled as LRFS.
It has been shown in \cite{gao2019multiobject} that,
for most types of unlabeled RFS multi-object densities,
the fused density computed by %directly adopting the result of 
(\ref{eq:GP}) 
no longer belongs to the same family of local densities,
thus hindering its application to scenarios which
require the conjugacy between local densities and the fused density
(e.g. in the context of DMT).
%hence the fused density can not be utilized as the prior for local filtering of next recursion.
However, such rule can be directly applied to fuse LRFS densities in the general form of (\ref{eq:LRFSN}), 
as shown in the following proposition.
Please notice that it is temporarily assumed
in this section that the labels of all considered LRFS densities have been perfectly matched.
Solving the problem of label mismatching is deferred to Section \ref{sec:LM}.
\begin{proposition}   \label{pro:P0}
	If the local density ${\boldsymbol \pi}^i=(p^i,f^i)$ of each agent $i\in{\cal N}$ 
	is in the form (\ref{eq:LRFSN}),
	and all the local densities are defined on the same label space,
	then the optimal fused LRFS density leading to MIL has density $\overline{\boldsymbol \pi}=(\overline p,\overline f)$
	with JEP $\overline p$ and CJPDF $\overline f$ given by
	\begin{align}
	\overline p\left( L \right) &= {\sum\limits_{i \in {\cal N}} {{\omega ^i}{p^i}\left( L \right)} }, \\
	\overline f\left( {\left. X \right|L} \right) &= \sum\limits_{i \in {\cal N}} {\frac{{{\omega ^i}{p^i}\left( L \right)}}{{\sum\limits_{j \in {\cal N}} {{\omega ^j}{p^j}\left( L \right)} }}{f^i}\left( {\left. X \right|L} \right)} .
	\end{align}
\end{proposition}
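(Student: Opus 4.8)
The plan is to build directly on the fact, already recorded in (\ref{eq:GP}), that the unconstrained minimizer of the MIL objective (\ref{eq:MWIL}) is the weighted arithmetic mean $\overline{\boldsymbol \pi}({\bf X}) = \sum_{i \in {\cal N}} \omega^i {\boldsymbol \pi}^i({\bf X})$, and then to read off its JEP and CJPDF through the defining relations (\ref{eq:PMFL}) and (\ref{eq:CJPD}). For completeness I would re-derive (\ref{eq:GP}) by splitting each divergence as $D_{\rm KL}({\boldsymbol \pi}^i \| {\boldsymbol \pi}) = \int {\boldsymbol \pi}^i \log {\boldsymbol \pi}^i \, \delta {\bf X} - \int {\boldsymbol \pi}^i \log {\boldsymbol \pi} \, \delta {\bf X}$, discarding the first term (independent of the optimization variable ${\boldsymbol \pi}$) and collecting the weighted remainder into the cross-entropy $-\int g({\bf X}) \log {\boldsymbol \pi}({\bf X}) \, \delta {\bf X}$ with $g \buildrel \Delta \over = \sum_{i \in {\cal N}} \omega^i {\boldsymbol \pi}^i$. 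Since the weights sum to one, $g$ is itself a normalized LRFS density, and writing $-\int g \log {\boldsymbol \pi} \, \delta {\bf X} = -\int g \log g \, \delta {\bf X} + D_{\rm KL}(g \| {\boldsymbol \pi})$ shows that this cross-entropy is minimized over all normalized ${\boldsymbol \pi}$ exactly at ${\boldsymbol \pi} = g$, because the first term does not depend on ${\boldsymbol \pi}$ and $D_{\rm KL}(g \| {\boldsymbol \pi}) \ge 0$ with equality iff ${\boldsymbol \pi} = g$ (Gibbs' inequality). The key structural point is that no closure constraint is active here: by (\ref{eq:LRFSN}) every multi-object density admits a JEP/CJPDF factorization, so the minimizer $g$ automatically lies in the general LRFS family.

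Given $\overline{\boldsymbol \pi} = g$, the first identity follows by evaluating (\ref{eq:PMFL}) on $\overline{\boldsymbol \pi}$ at a fixed label set $L = L_n$ and invoking linearity of the kinematic integral, which gives $\overline p(L_n) = \sum_{i \in {\cal N}} \omega^i \int \cdots \int {\boldsymbol \pi}^i(\{(x_1,l_1),\ldots,(x_n,l_n)\}) \, dx_1 \cdots dx_n = \sum_{i \in {\cal N}} \omega^i p^i(L_n)$, the last step being (\ref{eq:PMFL}) applied to each local density. For the CJPDF I would insert this expression for $\overline p$ into the defining ratio (\ref{eq:CJPD}) and replace each ${\boldsymbol \pi}^i$ by its factorization ${\boldsymbol \pi}^i({\bf X}_n) = p^i(L_n) f^i(X_n|L_n)$ from (\ref{eq:LRFSN}), obtaining $\overline f(X_n|L_n) = [\sum_{i \in {\cal N}} \omega^i p^i(L_n) f^i(X_n|L_n)] / [\sum_{j \in {\cal N}} \omega^j p^j(L_n)]$, which is precisely the claimed convex combination of local CJPDFs with weights $\omega^i p^i(L_n) / \sum_{j \in {\cal N}} \omega^j p^j(L_n)$.

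It then remains to confirm that $(\overline p, \overline f)$ is a legitimate pair: for each $L_n$ these weights are nonnegative and sum to one, so $\overline f(\cdot|L_n)$, being a convex combination of the permutation-invariant normalized densities $f^i(\cdot|L_n)$, inherits both the permutation invariance (\ref{eq:PI}) and normalization, while $\overline p$ inherits $\sum_{L \subseteq {\mathbb L}} \overline p(L) = 1$ from the $p^i$ by the same linearity argument. Since the whole derivation is essentially bookkeeping once (\ref{eq:GP}) is in hand, I do not expect a genuine analytical obstacle; the only point needing care is the degenerate case $\sum_{j \in {\cal N}} \omega^j p^j(L_n) = 0$, where the ratio (\ref{eq:CJPD}) is defined only under the usual $0/0$ convention. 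I would resolve this by noting that such label sets contribute zero fused JEP, so $\overline f(\cdot|L_n)$ may be assigned arbitrarily without altering $\overline{\boldsymbol \pi}$, leaving the factorization (\ref{eq:LRFSN}) intact.
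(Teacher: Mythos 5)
Your proposal is correct and follows essentially the same route as the paper's proof: both reduce the problem to the fact (\ref{eq:GP}) that the unconstrained MIL minimizer is the weighted arithmetic mean $\sum_{i\in{\cal N}}\omega^i{\boldsymbol \pi}^i$, and then identify the JEP and CJPDF by factoring this mixture into the form (\ref{eq:LRFSN}). The additional material you include---the Gibbs-inequality derivation of (\ref{eq:GP}) (which the paper takes as known from cited work), the normalization and permutation-invariance checks, and the treatment of the degenerate $0/0$ case---is just added rigor on the same argument, not a different approach.
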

Proof: see Appendix \ref{app:P0}.

Proposition \ref{pro:P0} can be directly applied to fuse multiple $\delta$-GLMB densities,
as shown in the following theorem.
\begin{theorem}  \label{the:T1}
	If the local density ${\boldsymbol \pi}_{\delta}^i=(p_{\delta}^i,f_{\delta}^i)$ of each agent $i\in{\cal N}$ 
	is $\delta$-GLMB with discrete index set $\Xi^i$,
	and all the local densities are defined on the same label space,
	then the optimal fused LRFS density leading to MIL has density $\overline {\boldsymbol \pi}_{\delta}=({\overline p}_{\delta},{\overline f}_{\delta})$
	with JEP $\overline p_{\delta}$ and CJPDF $\overline f_{\delta}$ given as follows
	\begin{align}
	{\overline p_\delta }\left( L_n \right) &= \sum\limits_{i \in {\cal N}} {\sum\limits_{\xi \in {\Xi^i}} {{w^{\xi,i}}\left( L_n \right)} },   \\
	{\overline f_\delta }\left( {\left. X_n \right|L_n} \right) &= \sum\limits_{i \in {\cal N}} {\sum\limits_{\xi \in {\Xi^i}} {\frac{{{w^{\xi,i}}\left( L_n \right)}}{{{\overline p_\delta }\left( L_n \right)}}\prod\limits_{k = 1}^n {f_{\left. {{l_k}} \right|L_n}^{\xi,i}\left( {x_k} \right)} } } .
	\end{align}
\end{theorem}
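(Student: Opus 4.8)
The plan is to obtain Theorem~\ref{the:T1} as an immediate corollary of Proposition~\ref{pro:P0}, since a $\delta$-GLMB density is nothing but a special instance of the general factorization \eqref{eq:LRFSN}. First I would verify that each local density $\boldsymbol{\pi}_\delta^i=(p_\delta^i,f_\delta^i)$ genuinely fits the JEP--CJPDF template: the JEP is $p_\delta^i(L_n)=\sum_{\xi\in\Xi^i}w^\xi(L_n)$ from \eqref{eq:GLMBpmf}, while the normalizing denominator appearing in the CJPDF \eqref{eq:GLMBspdf}, namely $\sum_{\xi'\in\Xi^i}w^{\xi'}(L_n)$, is exactly $p_\delta^i(L_n)$. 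This identification is what licenses invoking Proposition~\ref{pro:P0} with $p^i=p_\delta^i$ and $f^i=f_\delta^i$, so that the fused density is again of the form \eqref{eq:LRFSN} with JEP and CJPDF delivered by the two averaging formulas of that proposition.

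Next I would carry out the two substitutions. For the JEP, Proposition~\ref{pro:P0} gives $\overline p_\delta(L_n)=\sum_{i\in{\cal N}}\omega^i p_\delta^i(L_n)=\sum_{i\in{\cal N}}\omega^i\sum_{\xi\in\Xi^i}w^\xi(L_n)$, and introducing the shorthand $w^{\xi,i}(L_n)\defi\omega^i w^\xi(L_n)$ (which absorbs the fusion weight of agent $i$ into its hypothesis weight) immediately yields the claimed double sum. For the CJPDF, I would start from $\overline f_\delta(X_n|L_n)=\sum_{i\in{\cal N}}\frac{\omega^i p_\delta^i(L_n)}{\overline p_\delta(L_n)}f_\delta^i(X_n|L_n)$, insert \eqref{eq:GLMBspdf}, and exploit the cancellation just noted: the numerator factor $p_\delta^i(L_n)$ cancels against the denominator $\sum_{\xi'\in\Xi^i}w^{\xi'}(L_n)=p_\delta^i(L_n)$ inside $f_\delta^i$. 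What survives is $\sum_{i\in{\cal N}}\sum_{\xi\in\Xi^i}\frac{\omega^i w^\xi(L_n)}{\overline p_\delta(L_n)}\prod_{k=1}^n f_{l_k|L_n}^\xi(x_k)$, and rewriting $\omega^i w^\xi$ as $w^{\xi,i}$ (and the track PDFs as $f_{l_k|L_n}^{\xi,i}$) reproduces the stated expression.

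Because the manipulation is purely algebraic, there is no genuine analytic obstacle here; the only points demanding care are bookkeeping. I would make explicit that the fused density is itself a $\delta$-GLMB whose index set is the disjoint union $\mathop \uplus \nolimits_{i\in{\cal N}}\Xi^i$, so that the composite pair of sums $\sum_{i}\sum_{\xi\in\Xi^i}$ ranges over a single hypothesis space, and I would confirm that the new weights constitute a valid $\delta$-GLMB parameterization: summing $w^{\xi,i}(L)=\omega^i w^\xi(L)$ over all $(L,\xi)$ with $\xi\in\Xi^i$ and then over $i$ gives $\sum_{i\in{\cal N}}\omega^i=1$, since each agent's hypothesis weights are themselves normalized and the $\omega^i$ sum to unity. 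Keeping the agent superscript consistent throughout the weight and track-PDF notation, so that no cross-agent terms are inadvertently mixed, then completes the argument and establishes that the $\delta$-GLMB family is closed under MIL fusion, in contrast to the M$\delta$-GLMB and LMB cases treated subsequently.
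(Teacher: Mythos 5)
Your proposal is correct and follows exactly the route the paper intends: the paper omits the proof of Theorem~\ref{the:T1} precisely because it is the direct substitution of the $\delta$-GLMB parameterization \eqref{eq:GLMBpmf}--\eqref{eq:GLMBspdf} into Proposition~\ref{pro:P0}, which is what you carry out. Your added bookkeeping---making explicit that $w^{\xi,i}$ absorbs the fusion weight $\omega^i$, that the fused index set is $\mathop \uplus \nolimits_{i\in{\cal N}}\Xi^i$, and that the fused weights are properly normalized---is a useful clarification rather than a deviation.
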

Since the proof of Theorem \ref{the:T1} is quite straightforward from Proposition \ref{pro:P0}, 
it is omitted.
%Concerning the fusion of $\delta$-GLMB densities,
%the following comments should be addressed.
However,
unlike  $\delta$-GLMB densities that are closed under MIL fusion,
fusion of M$\delta$-GLMB/LMB densities by (\ref{eq:GP})
will not result into an M$\delta$-GLMB/LMB density again,
as can be straightforwardly seen.
Hence, labelled multi-object densities encounter the same difficulties in the application of MIL fusion as their unlabelled counterparts.
In this paper,
it is proposed to find the ``best" global M$\delta$-GLMB/LMB density yielding MIL
by explicitly adding the constraint that the solution of (\ref{eq:MWIL}) is of the same type of the fusing densities ${\boldsymbol \pi}^i$,
which is essentially the same idea of applying the MIL rule to 
fuse MPPs and i.i.d. cluster processes in \cite{gao2019multiobject}.
First, we consider the problem of fusing multiple M$\delta$-GLMB densities under the MIL criterion, 
which can be solved by means of the following proposition. 

\begin{proposition}   \label{pro:P1}
	If the local densities ${\boldsymbol \pi}^i_M$, $i\in{\cal N}$, are M$\delta$-GLMB
	with JEP $p^i_M$ and CJPDF $f^i_M$ given by
	\begin{align}
	p^i_{M}\left( L_n \right) &= w^i\left( L_n \right), \\
	f^i_{M}\left( {\left. X_n \right|L_n} \right) &= \prod\limits_{k = 1}^n {{f_{\left. {{l_k}} \right|L_n}^i}\left( {{x_k}} \right)},  
	\end{align}
	and all the local densities are defined on the same label space,
	then the best M$\delta$-GLMB density $\overline{\boldsymbol \pi}_M = (\overline p_M,\overline f_M)$ 
	leading to MIL	is given by
	\begin{align}
	\overline p_M\left( L_n \right) &= \sum\limits_{i \in {\cal N}} {{\omega ^i}{p^i_M}\left( L_n \right)} , \label{eq:Fp} \\
	\overline f_M\left( {\left. X_n \right|L_n} \right) &= \prod\limits_{k = 1}^n {{\overline f_{{l_k}|L_n}}\left( {{x_k}} \right)} ,   \label{eq:Ff}
	\end{align}
	where
	\begin{align}
	{\overline f_{{l_k}|L_n}}\left( {{x_k}} \right) &= \sum\limits_{i \in {\cal N}} {{{\tilde \omega }^i}(L_n)\cdot f_{{l_k}|L_n}^i\left( {{x_k}} \right)}, \quad k = 1,\ldots,n ,  \label{eq:Ffl}   \\
	{{\tilde \omega }^i}(L_n) &= \frac{{{\omega ^i}p_M^i\left( L_n \right)}}{{\sum\nolimits_{j \in {\cal N}} {{\omega ^j}p_M^j\left( L_n \right)} }}.   \label{eq:OmegaMGLMB}
	\end{align}
\end{proposition}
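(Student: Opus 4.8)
The plan is to exploit the factorization ${\boldsymbol \pi}({\bf X}_n)=p(L_n)\,f(X_n|L_n)$ so that the objective $\sum_{i\in{\cal N}}\omega^i D_{\rm KL}({\boldsymbol \pi}^i_M\|\overline{\boldsymbol \pi}_M)$ splits into a part that depends only on the JEP and a part that depends only on the CJPDF, and then to minimize each part separately under the M$\delta$-GLMB structural constraint. First I would rewrite each divergence using the set integral over LRFSs, substitute the factored forms of ${\boldsymbol \pi}^i_M$ and $\overline{\boldsymbol \pi}_M$, and expand the logarithm of the ratio into a sum of a JEP log-term and a CJPDF log-term. Integrating out the kinematic variables in the JEP log-term leaves the factor multiplied by $\int f^i_M(X_n|L_n)\,dX_n=1$, so that term collapses to the discrete divergence between the two JEPs; the CJPDF log-term becomes a JEP-weighted sum of conditional divergences. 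This yields
\[
D_{\rm KL}\left({\boldsymbol \pi}^i_M \,\|\, \overline{\boldsymbol \pi}_M\right) = D_{\rm KL}\left(p^i_M \,\|\, \overline p_M\right) + \sum_{L_n} p^i_M(L_n)\, D_{\rm KL}\left(f^i_M(\cdot|L_n) \,\|\, \overline f_M(\cdot|L_n)\right),
\]
after which the weighted sum over $i\in{\cal N}$ separates into a JEP term and a CJPDF term involving disjoint unknowns, optimizable independently.

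For the JEP term I would invoke Remark~\ref{rem:LCILRFS}: an M$\delta$-GLMB imposes no structural constraint on its JEP, so $\overline p_M$ may be any distribution over label sets. Minimizing $\sum_i \omega^i D_{\rm KL}(p^i_M\|\overline p_M)$ is therefore the unconstrained MIL problem in the discrete label variable, whose solution is the weighted arithmetic mean. This is exactly the elementary fact underlying (\ref{eq:GP}) and Proposition~\ref{pro:P0}, and it directly gives $\overline p_M(L_n)=\sum_i \omega^i p^i_M(L_n)$, i.e.\ (\ref{eq:Fp}).

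The CJPDF term is where the M$\delta$-GLMB constraint is active, and I expect this to be the crux. Exchanging the order of summation, it reads $\sum_{L_n}\sum_i \omega^i p^i_M(L_n)\,D_{\rm KL}(f^i_M(\cdot|L_n)\|\overline f_M(\cdot|L_n))$, so for each fixed $L_n$ I can factor out the constant $\overline p_M(L_n)$ and normalize the weights to $\tilde\omega^i(L_n)$ as in (\ref{eq:OmegaMGLMB}), minimizing each $L_n$ subproblem separately. The key step is to use the product form of both the given $f^i_M=\prod_k f^i_{l_k|L_n}$ and the \emph{enforced} $\overline f_M=\prod_k \overline f_{l_k|L_n}$: since the logarithm of a ratio of products is a sum and, for each index $k$, every non-$k$ single-track factor integrates to one, the divergence of the products decomposes as $D_{\rm KL}(f^i_M(\cdot|L_n)\|\overline f_M(\cdot|L_n))=\sum_{k=1}^n D_{\rm KL}(f^i_{l_k|L_n}\|\overline f_{l_k|L_n})$. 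Substituting this back, the objective for each $L_n$ becomes $\sum_{k=1}^n\sum_i \tilde\omega^i(L_n)\,D_{\rm KL}(f^i_{l_k|L_n}\|\overline f_{l_k|L_n})$, which decouples across the tracks $k$.

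Finally, each per-track subproblem is once more the elementary MIL problem in a single-object PDF, whose minimizer is the weighted arithmetic mean $\overline f_{l_k|L_n}=\sum_i \tilde\omega^i(L_n) f^i_{l_k|L_n}$, namely (\ref{eq:Ffl}); this minimizer is feasible since an arithmetic mean of PDFs is a PDF and the product of such PDFs is a valid CJPDF of M$\delta$-GLMB form, so reassembling the factors recovers (\ref{eq:Ff}). The main obstacle is establishing the product-KLD decomposition and verifying that imposing the product structure on $\overline f_M$ is precisely what renders the conditional problem separable track-by-track; once that identity is in hand, the whole proposition reduces to repeated application of the elementary fact that the weighted arithmetic mean minimizes a weighted sum of KLDs taken with the fused density as the righthand argument.
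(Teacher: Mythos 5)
Your proposal is correct and follows essentially the same route as the paper's own proof: impose the M$\delta$-GLMB product constraint in (\ref{eq:MWIL}), decompose the KLD into the discrete JEP divergence plus a JEP-weighted sum of per-track conditional divergences (the paper's (\ref{eq:KLDLCILRFS})), and then apply the elementary weighted-arithmetic-mean minimizer (\ref{eq:GP}) separately to the JEP and to each track. The only cosmetic difference is that you split the CJPDF divergence into per-track terms in a second step, whereas the paper obtains that per-track form directly in a single expansion.
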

Proof: see Appendix \ref{app:P1}.

%The results of Proposition \ref{pro:P1} can be directly adopted to fuse multiple M$\delta$-GLMB densities,
%however, it can not be applied for fusion of LMB densities,
%since the fused label set PMF (\ref{eq:Fp}) can not be further simplified,
%and hence the global density will become M$\delta$-GLMB.
Next, in order to find the fused LMB density leading to MIL,
the structure of JEP (\ref{eq:LMBwl}) of an LMB density should be further exploited, as shown in the following proposition.
\begin{proposition}  \label{pro:P2}
	If the local density of each agent $i\in{\cal N}$ is modeled as LMB ${\boldsymbol \pi ^i_{\beta}} = {\left\{ {\left( {{r_l^i},{f_l^i}} \right)} \right\}_{l \in {\mathbb L} }}$,
	and all the local densities are defined on the same label space $\mathbb L$,
	then the best LMB density leading to MIL has density $\overline{\boldsymbol \pi}_{\beta} = {\left\{ \left( {\overline r_l,\overline f_l} \right) \right\}_{l \in {\mathbb L} }}$
	with EP $\overline r_l$ and PDF $\overline f_l$ of each label $l\in{\mathbb L}$ given as follows
	\begin{align}
	\overline r_l &= \sum\limits_{i \in {\cal N}} {{\omega ^i}{r^i_l}} ,  \label{eq:FLMBinten}  \\
	{\overline f_l}\left( x \right) &= \sum\limits_{i \in N} {{{\tilde \omega }^i_l}f_l^i\left( x \right)},  \label{eq:FLMBspdf}
	\end{align}
	where 
	\begin{align}
	{{\tilde \omega }^i}_l = \frac{{{\omega ^i}r_l^i}}{{\sum\nolimits_{j \in {\cal N}} {{\omega ^j}r_l^j} }}   \label{eq:OmegaLMB}
	\end{align}
\end{proposition}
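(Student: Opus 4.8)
The plan is to substitute the parametric LMB forms (\ref{eq:LMBwl})--(\ref{eq:LMBspdf}) for both the local densities ${\boldsymbol \pi}^i_{\beta}$ and the candidate fused density ${\boldsymbol \pi}_{\beta}$ directly into the MIL objective (\ref{eq:MWIL}), and then to minimize over the LMB parameters $\{r_l\}_{l\in{\mathbb L}}$ and $\{f_l\}_{l\in{\mathbb L}}$. The crucial algebraic observation is that the logarithm of the LMB JEP decomposes \emph{additively} over labels, namely
\begin{align}
\log p_{\beta}\left(L\right) = \sum_{l\in L}\log r_l + \sum_{l\in{\mathbb L}\setminus L}\log\left(1-r_l\right). \nonumber
\end{align}
This is precisely the feature that distinguishes the present case from Proposition \ref{pro:P1}: for an M$\delta$-GLMB the JEP is an unconstrained function of $L$, so its MIL-optimal value is the arithmetic mean $\overline p_M$; for an LMB, instead, the JEP is tied to the per-label existence probabilities through the product form (\ref{eq:LMBwl}), so the optimization must be carried out over the $r_l$ rather than over the JEP as a whole.

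Next I would expand the KLD (\ref{eq:KLD}) using the general factorization ${\boldsymbol \pi}_{\beta}({\bf X}) = p_{\beta}(L)f_{\beta}(X|L)$, splitting it into a JEP term containing $\log\big(p^i(L)/p(L)\big)$ and a CJPDF term containing $\log\big(f^i(X|L)/f(X|L)\big)$. Discarding all contributions independent of the optimization variables, the objective to be minimized becomes a sum of two decoupled pieces: one depending only on $\{r_l\}$, the other only on $\{f_l\}$. In both pieces the double summation over label subsets $L$ collapses by virtue of the LMB marginalization identity $\sum_{L \ni l}p^i(L) = r^i_l$ (and hence $\sum_{L \not\ni l}p^i(L) = 1-r^i_l$), which follows directly from (\ref{eq:LMBwl}). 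After the log-decomposition above and this collapse, the JEP piece reduces to a sum over labels of weighted Bernoulli cross-entropies,
\begin{align}
-\sum_{l\in{\mathbb L}}\sum_{i\in{\cal N}}\omega^i\Big[r^i_l\log r_l + \left(1-r^i_l\right)\log\left(1-r_l\right)\Big], \nonumber
\end{align}
while the CJPDF piece reduces to $-\sum_{l\in{\mathbb L}}\int\big(\sum_{i\in{\cal N}}\omega^i r^i_l f^i_l(x)\big)\log f_l(x)\,dx$, using that the $f^i_l$ are normalized so that integrating out all but one track in each product leaves a single term.

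Finally I would solve the two decoupled problems in closed form. Each Bernoulli cross-entropy is minimized label-by-label by setting its derivative with respect to $r_l$ to zero, which after using $\sum_{i\in{\cal N}}\omega^i = 1$ yields $\overline r_l = \sum_{i\in{\cal N}}\omega^i r^i_l$, i.e. (\ref{eq:FLMBinten}). Each CJPDF sub-problem is a cross-entropy minimization over the normalized density $f_l$; by Gibbs' inequality its minimizer is the normalized weighted mixture $\overline f_l(x) = \sum_{i\in{\cal N}}\tilde\omega^i_l f^i_l(x)$ with $\tilde\omega^i_l = \omega^i r^i_l/\sum_{j\in{\cal N}}\omega^j r^j_l$, which is (\ref{eq:FLMBspdf})--(\ref{eq:OmegaLMB}). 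The main obstacle is conceptual rather than computational: one must recognize that the LMB family is a \emph{constrained} parametric subclass, so that --- unlike in Proposition \ref{pro:P1} --- the fused JEP is \emph{not} the arithmetic mean of the local JEPs. It is the log-additivity of (\ref{eq:LMBwl}) together with the marginalization identity that makes the per-label decoupling possible and delivers the weighted arithmetic mean of the existence probabilities in (\ref{eq:FLMBinten}).
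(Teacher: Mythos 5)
Your proof is correct and follows essentially the same route as the paper's: both exploit the log-additivity of the LMB JEP together with the marginalization identity $\sum_{L \ni l} p^i_\beta(L) = r^i_l$ to collapse the objective into decoupled per-label Bernoulli and per-label PDF problems, each solved by a weighted arithmetic mean. The only cosmetic difference is bookkeeping: you drop the constant entropy terms and minimize cross-entropies via a derivative and Gibbs' inequality, whereas the paper retains the full per-label KLDs ($\sum_{l} D_{\rm KL}(\rho^i_l \| \rho_l)$ and $\sum_{l} r^i_l D_{\rm KL}(f^i_l \| f_l)$, specialized from its M$\delta$-GLMB decomposition (\ref{eq:KLDLCILRFS})) and then invokes (\ref{eq:GP}) for each sub-problem --- the same minimization in different clothing.
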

Proof: see Appendix \ref{app:P2}.

\begin{remark}  \label{rem:R1}
	It should be noted that
	it is also possible to directly adopt the result of Proposition \ref{pro:P1} in order to fuse multiple LMB densities.
	Nevertheless,
	the resulting global density will become M$\delta$-GLMB.
	This fact can be seen by comparing the fused JEPs computed by (\ref{eq:Fp}) and (\ref{eq:FLMBinten}),
	where the fused JEP  in (\ref{eq:Fp}) is given by
	\begin{align}
	{\overline p }\left( L \right) &= \sum\limits_{i \in {\cal N}} {{\omega ^i}p_\beta ^i\left( L \right)} \nonumber \\
	& = \sum\limits_{i \in {\cal N}} {{\omega ^i}\left[ {\prod\limits_{l \in {\mathbb L} } {\left( {1 - r_l^i} \right)} \prod\limits_{l' \in L} {\frac{{r_{l'}^i}}{{1 - r_{l'}^i}}} } \right]} ,  \label{eq:JEPP2}
	\end{align}
	and the fused JEP in (\ref{eq:FLMBinten}) is given by
	\begin{align}
	{\overline p }\left( L \right) = \prod\limits_{l \in {\mathbb L}} {\left( {1 - \sum\limits_{i \in {\cal N}} {{\omega ^i}r_l^i} } \right)} \prod\limits_{l' \in L} {\frac{{\sum\nolimits_{i \in {\cal N}} {{\omega ^i}r_{l'}^i} }}{{1 - \sum\nolimits_{i \in {\cal N}} {{\omega ^i}r_{l'}^i} }}} .   \label{eq:JEPP3}
	\end{align}
%	Since the PDFs of labels are independent to each other in LMB density,
%	it can be straightforwardly seen that the CJPDFs of fused densities by (\ref{eq:Ff}) 
	However, the resulting global M$\delta$-GLMB density can be converted to LMB density based on matching the probability hypothesis density (PHD) \cite{reuter2014labeled},
	and the resulting LMB density is consistent to the one computed by Proposition \ref{pro:P2},
	as shown in Appendix \ref{app:rem3}.
	In this regard,
	Proposition \ref{pro:P2} can serve as the principled certification that 
	such conversion  can lead to minimum information loss. 
	Furthermore, 
	the results of Proposition \ref{pro:P2} are also practically valuable.
	Proposition \ref{pro:P2} indicates that
	fusion of multiple LMB densities defined on the same label space
	amounts to performing a label-wise MIL fusion of BCs,
	thus its computational load increases linearly with the number of BCs.
	Instead,
	the fusion of multiple M$\delta$-GLMB amounts to performing label-set-wise MIL fusion,
	and the computational load turns out to increase exponentially with the number of labels.
\end{remark}

\subsection{Accuracy analysis}
It has been pointed out that
the \textit{MIL-optimal fused density} (MIL-OFD) of M$\delta$-GLMB/LMB densities
is no longer  an M$\delta$-GLMB/LMB density,
thus turns out to be practically useless in the context of recursive local multi-object filtering.
In Propositions \ref{pro:P1} and \ref{pro:P2}, it is proposed
to find the best, in the MIL sense, fused density within the same M$\delta$-GLMB/LMB family
of the local densities.
In this respect,
a natural question 
concerns the accuracy of the M$\delta$-GLMB/LMB approximation,  provided by Proposition 2/3, of the MIL-OFD.
Such a question 
 is addressed in the following theorem.
%In this subsection, we address such question 
%on analyzing the KLD from the fused M$\delta$-GLMB/LMB density to OFD with MIL rule.
%Specifically, the accuracy when the fused M$\delta$-GLMB
%is adopted to replace the OFD 
%applying the results of Proposition \ref{pro:P1} 
%is analyzed in the following theorem.
\begin{theorem}  \label{the:T2}
	The KLD from the fused M$\delta$-GLMB/LMB of Proposition \ref{pro:P1}/\ref{pro:P2} to the MIL-OFD is bounded by 
	the average KLD among all pairs of agents,
	i.e. 
	\begin{align}
	{D_{\rm KL}}\left( { {\sum\limits_{i \in {\cal N}} {{\omega ^i}\pi _M^i} } \|{{\overline \pi }_M}} \right) & \le \sum\limits_{i \in {\cal N}} {\sum\limits_{j \in {\cal N},i \ne j} {{\omega ^i}{\omega ^j}{D_{\rm KL}}\left( {\left. {\pi _M^i} \right\|\pi _M^j} \right)} } ,  \label{eq:KLDMO} \\
	{D_{\rm KL}}\left( { {\sum\limits_{i \in {\cal N}} {{\omega ^i}\pi _\beta ^i} } \|{{\overline \pi }_\beta }} \right) & \le \sum\limits_{i \in {\cal N}} {\sum\limits_{j \in {\cal N},i \ne j} {{\omega ^i}{\omega ^j}{D_{\rm KL}}\left( {\left. {\pi _\beta ^i} \right\|\pi _\beta ^j} \right)} } . \label{eq:KLDLO}
	\end{align}
\end{theorem}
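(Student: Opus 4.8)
The plan is to prove the M$\delta$-GLMB bound (\ref{eq:KLDMO}); the LMB bound (\ref{eq:KLDLO}) then follows by a verbatim argument, so I would treat the two cases in parallel. Write $\widehat\pi_M \buildrel \Delta \over= \sum_{i\in{\cal N}}\omega^i\pi_M^i$ for the unconstrained MIL-OFD (the left argument of the KLD in the claim). The single tool I need is the \emph{mixture decomposition identity}: for any density $\pi$,
\begin{align}
\sum_{i\in{\cal N}}\omega^i D_{\rm KL}\!\left(\pi_M^i\,\|\,\pi\right) = \sum_{i\in{\cal N}}\omega^i D_{\rm KL}\!\left(\pi_M^i\,\|\,\widehat\pi_M\right) + D_{\rm KL}\!\left(\widehat\pi_M\,\|\,\pi\right).
\end{align}
This is obtained by inserting $\widehat\pi_M$ inside the logarithm of each local KLD and using $\int \widehat\pi_M({\bf X})\,\delta{\bf X}=\sum_i\omega^i\int \pi_M^i({\bf X})\,\delta{\bf X}$ to collapse the cross term into $D_{\rm KL}(\widehat\pi_M\|\pi)$; it holds verbatim for set integrals.

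First I would exploit the optimality that characterizes $\overline\pi_M$. By Proposition \ref{pro:P1}, $\overline\pi_M$ minimizes $\sum_{i\in{\cal N}}\omega^i D_{\rm KL}(\pi_M^i\,\|\,\cdot)$ over the entire M$\delta$-GLMB family; since each local density $\pi_M^j$ itself lies in that family, it is a feasible competitor, whence $\sum_i\omega^i D_{\rm KL}(\pi_M^i\|\overline\pi_M)\le\sum_i\omega^i D_{\rm KL}(\pi_M^i\|\pi_M^j)$ for every $j\in{\cal N}$. Applying the decomposition identity once with $\pi=\overline\pi_M$ and once with $\pi=\pi_M^j$ and subtracting, the common term $\sum_i\omega^i D_{\rm KL}(\pi_M^i\|\widehat\pi_M)$ cancels, so this inequality collapses to the per-agent bound $D_{\rm KL}(\widehat\pi_M\|\overline\pi_M)\le D_{\rm KL}(\widehat\pi_M\|\pi_M^j)$. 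Averaging over $j$ with the weights $\omega^j$ (which sum to one) then gives $D_{\rm KL}(\widehat\pi_M\|\overline\pi_M)\le\sum_{j\in{\cal N}}\omega^j D_{\rm KL}(\widehat\pi_M\|\pi_M^j)$.

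Next I would bound each $D_{\rm KL}(\widehat\pi_M\|\pi_M^j)$ using convexity of the KLD in its first argument: since $\widehat\pi_M=\sum_i\omega^i\pi_M^i$, Jensen's inequality yields $D_{\rm KL}(\widehat\pi_M\|\pi_M^j)\le\sum_i\omega^i D_{\rm KL}(\pi_M^i\|\pi_M^j)$. Substituting into the previous display and noting that the diagonal terms $D_{\rm KL}(\pi_M^i\|\pi_M^i)=0$ vanish, the double sum over $i,j$ reduces to the sum over $i\ne j$, which is precisely the right-hand side of (\ref{eq:KLDMO}). The LMB inequality (\ref{eq:KLDLO}) is established identically, replacing the M$\delta$-GLMB family by the LMB family in the optimality step (Proposition \ref{pro:P2}).

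The main obstacle—really the only conceptual step—is the argument in the second paragraph: recognizing that the constrained optimum $\overline\pi_M$ may be tested against each individual local density $\pi_M^j$ as a legitimate competitor precisely because all local densities belong to the same family, and then using the decomposition identity to convert the resulting family of averaged inequalities into a clean statement about $\widehat\pi_M$ alone. The remaining ingredients (the decomposition identity and the convexity bound) are routine once the set-integral manipulations are checked to be well defined, i.e. once one verifies that the KLDs involved are finite and that Fubini-type interchange of the weighted sum and the set integral is legitimate.
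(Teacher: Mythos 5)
Your proof is correct, but it takes a genuinely different route from the paper's. The paper never invokes the variational (minimizer) characterization of $\overline{\boldsymbol \pi}_M$ at all: it starts from the closed-form fused expressions (\ref{eq:Fp})--(\ref{eq:Ffl}), writes $\overline{\boldsymbol \pi}_M=\sum_{i}\omega^i\overline{\boldsymbol \pi}_M$, applies the mixture bound $D_{\rm KL}\bigl(\sum_i\omega^i{\boldsymbol \pi}_M^i\,\|\,\sum_i\omega^i\overline{\boldsymbol \pi}_M\bigr)\le\sum_i\omega^iD_{\rm KL}\bigl({\boldsymbol \pi}_M^i\,\|\,\overline{\boldsymbol \pi}_M\bigr)$ (joint convexity, cited from \cite{do2003fast}), expands each term through the JEP/CJPDF decomposition (\ref{eq:KLDLCILRFS}) that is specific to the M$\delta$-GLMB structure, and then uses convexity of the KLD in its \emph{second} argument on the fused JEP and track PDFs, which are arithmetic means of the local ones. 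You instead use only (i) the compensation identity for mixtures, (ii) the fact that $\overline{\boldsymbol \pi}_M$ is the constrained minimizer of $\sum_i\omega^iD_{\rm KL}({\boldsymbol \pi}_M^i\,\|\,\cdot)$ over a family that contains every local density as a feasible competitor, and (iii) convexity of the KLD in its \emph{first} argument. This buys real advantages: the LMB case (\ref{eq:KLDLO}) is literally verbatim, and in fact your argument proves the theorem for the MIL-optimal element of \emph{any} family containing the local densities; you never touch the explicit fused formulas, so you avoid the label-set-dependent CJPDF weights (\ref{eq:OmegaMGLMB}) that enter the paper's substitution step; and you obtain the sharper intermediate bound $D_{\rm KL}\bigl(\widehat{\boldsymbol \pi}_M\,\|\,\overline{\boldsymbol \pi}_M\bigr)\le\sum_j\omega^jD_{\rm KL}\bigl(\widehat{\boldsymbol \pi}_M\,\|\,{\boldsymbol \pi}_M^j\bigr)$ along the way. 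Conversely, the paper's computational route has the merit of working even if one takes (\ref{eq:Fp})--(\ref{eq:Ffl}) purely as the definition of the fused density, without appealing to its optimality. Two small points to tighten in your write-up: the cross term in the compensation identity collapses because $\sum_i\omega^i{\boldsymbol \pi}_M^i=\widehat{\boldsymbol \pi}_M$ holds \emph{pointwise} (linearity of the set integral), not merely as an equality of normalizations; and the cancellation in your second paragraph is legitimate because the common term is automatically finite, since ${\boldsymbol \pi}_M^i\le\widehat{\boldsymbol \pi}_M/\omega^i$ gives $\sum_i\omega^iD_{\rm KL}\bigl({\boldsymbol \pi}_M^i\,\|\,\widehat{\boldsymbol \pi}_M\bigr)\le\sum_i\omega^i\log(1/\omega^i)<\infty$, while if some $D_{\rm KL}\bigl(\widehat{\boldsymbol \pi}_M\,\|\,{\boldsymbol \pi}_M^j\bigr)$ is infinite the corresponding per-agent bound is trivial.
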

The proof of Theorem \ref{the:T2} is given in Appendix \ref{app:the2}.

\section{Dealing with different fields-of-view}
The previous section has proposed to fuse LRFS densities
adopting the MIL rule.
%which can deal with the low probability of detection situation.
Such a rule has been developed under the pre-condition that
all the involved LRFS densities represent the multi-object LRFS
in the same FoV.
However, this is not always the case
due to the fact that, in practice,
the detection zone of each sensor is limited.
In order to cover a large-scale \textit{area of interest} (AoI),
many sensors with limited FOVs are deployed within the AoI.
In this section,
MIL fusion is extended to handle the problem of 
multi-object density fusion with different FoVs.

\subsection{On difficulties of MIL fusion with different FoVs}   \label{sec:diff}

Recall that any LRFS density ${\boldsymbol \pi}=(p,f)$ is completely characterized by 
its JEP $p$
and CJPDF $f$.
%For the sake of convenience, we abbreviate (\ref{eq:GLRFS}) as
%${\boldsymbol \pi } = \left( {p,f} \right)$.
Let us consider the problem of fusing LRFS densities ${\boldsymbol \pi }^i=(p^i,f^i), i\in{\cal N}$,
in different FoVs with their respective local label space $\mathbb L^i$,
where $\mathbb L^i$ may be (partially) overlapped or totally disjoint with $\mathbb L^j$,
%, i.e. $\mathbb L^i \cap \mathbb L^j \ne \emptyset$, 
for $i,j \in {\cal N}, i \ne j$.
Notice that it is assumed here that the labels among local densities
have been perfectly matched.
The purpose is to find the global LRFS density
${\boldsymbol \pi }=(p,f)$
defined on the label space $\mathbb L = \mathop \cup\nolimits_{i \in {\cal N}} {\mathbb L^i}$
that leads to MIL.
As indicated in Proposition \ref{pro:P0},
the fused LRFS density $\overline {\boldsymbol \pi}=(\overline f,\overline p)$ computed by the MIL rule 
amounts to fusing the JEPs and CJPDFs separately,
and the resulting fused JEP $\overline p$ of any label set $L \subseteq {\mathbb L}$
and its corresponding CJPDF $\overline f$
are equal to the weighted sums of the involved JEPs 
and CJPDFs defined on the same label set $L$.
However, if the MIL rule is directly adopted
without additional care
to fuse LRFS densities with different FoVs,
the resulting fused density might not correctly reflect
the joint existence of all targets that are located
in both the common and exclusive FoVs of the agents.
%Before illustrating the reasons,
%an example is given in advance to strengthen such fact.
The reason leading to such difficulties is that,
for general LRFS densities,
the labels are not independent of each other.
%\textcolor{red}{Instead, they are jointly depicted by the basic element ``label set",
%inside which all the labels share the same existence probability.}
In the case in which each local LRFS density ${\boldsymbol \pi }^i, i \in {\cal N}$, carries only information 
within its own FoV,
%the local JEPs 
it turns out that $p^i(L)=0$, if $L \cap \left( {\mathbb L\backslash {\mathbb L_i}} \right)  \ne \emptyset$.
As a result,
if the existing targets are located inside the exclusive FoVs of sensor nodes,
they cannot be detected jointly.
In order to better illustrate this point,
an example is given hereafter.
\begin{figure}[tb]
	\centering {
		\begin{tabular}{ccc}
			\includegraphics[width=0.35\textwidth]{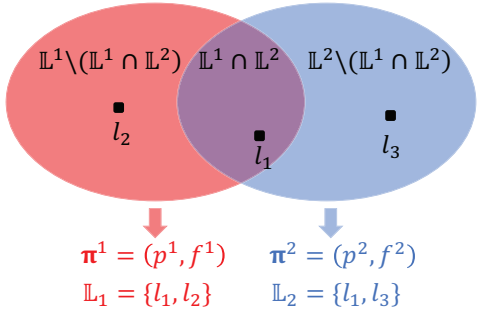}\\
		\end{tabular}
	}
	\caption{Fusion of two LRFS densities in two partially overlapped FoVs.}
	\vspace{-0.5\baselineskip}
	\label{Fig:TDDFoV}
\end{figure}

\begin{example}
	Consider the problem of fusing two LRFS densities ${\boldsymbol \pi }^1$ and ${\boldsymbol \pi }^2$ 
	in two partially overlapped FoVs,
	which are defined on label spaces $\mathbb L^1$ and $\mathbb L^2$ respectively,
	as shown in Fig. \ref{Fig:TDDFoV}.
	Suppose that the fusion weights of the two local LRFS densities are both $0.5$,
	their respective JEPs are given by
	${p^1}\left( \emptyset  \right) = 0.1,\, {p^1}\left( {\left\{ {{l_1}} \right\}} \right) = 0.05,\, {p^1}\left( {\left\{ {{l_2}} \right\}} \right) = 0.05,\, {p^1}\left( {\left\{ {{l_1},{l_2}} \right\}} \right) = 0.8$
	and ${p^2}\left( \emptyset  \right) = 0.05,\, {p^2}\left( {\left\{ {{l_1}} \right\}} \right) = 0.05,\, {p^2}\left( {\left\{ {{l_3}} \right\}} \right) = 0.05,\, {p^2}\left( {\left\{ {{l_1},{l_3}} \right\}} \right) = 0.85$.
	If we directly apply MIL fusion,
	the JEP of the fused LRFS density is computed as
	$ p\left( \emptyset  \right) = 0.1,\,  p\left( {\left\{ {{l_1}} \right\}} \right) = 0.05,\,  p\left( {\left\{ {{l_2}} \right\}} \right) = 0.025,\,  p\left( {\left\{ {{l_3}} \right\}} \right) = 0.025,\, p\left( {\left\{ {{l_1},{l_2}} \right\}} \right) = 0.4,\,  p\left( {\left\{ {{l_1},{l_3}} \right\}} \right) = 0.425,\, p\left( {\left\{ {{l_2},{l_3}} \right\}} \right) = 0,\, p\left( {\left\{ {{l_1},{l_2},{l_3}} \right\}} \right) = 0$.
	Even if the tracks in the exclusive FoVs are copied into the fused JEP $p$,
	the tracks $\{{l_1},{l_2},{l_3}\}$ cannot be jointly detected
	since $p(\{{l_1},{l_2},{l_3}\}) = 0$. 
\end{example}

\begin{remark}
	It should be noticed that
	if the involved LRFS densities are LMB,
	the above mentioned difficulties are no longer present.
	This is due to the fact that the LMB density directly relies on labels rather than label sets,
	and the existence probabilities and PDFs of labels are independent of each other.
	As a result,
	it can be directly checked that by utilizing the results of Proposition \ref{pro:P2},
	the EPs of labels of the fused LMB density will not go to zero,
	thus the JEP of existing labels will also not become null.
	Interestingly,
	if Proposition \ref{pro:P1} is adopted to fuse LMB densities, 
	the resulting JEP (\ref{eq:JEPP2}) becomes zero, thus the above mentioned difficulties still exist.
\end{remark}

\subsection{Fusion of independent LRFS densities based on MIL rule}
In order to overcome the difficulties raised in Section \ref{sec:diff},
in this subsection
we propose to perform fusion of local LRFS densities with different FoVs
by adopting the MIL rule on their respective sub-densities
defined on mutually disjoint label subspaces.

Suppose that the global label space $\mathbb L$ has been decomposed into $M$ disjoint subspaces,
i.e. $\mathbb L = \mathop  \uplus \nolimits_{m = 1}^M {{\mathbb L_m}}$ with ${\mathbb L_m}\bigcap{\mathbb L_{m'}} = \emptyset$ if $m \ne m'$.
For each subspace $\mathbb L_m$,
an LRFS density ${\boldsymbol \pi}_m= \left( {{p_m},{f_m}} \right)$ has been properly defined.
Accordingly, for an LRFS $\bf X$ whose elements are defined over the global label space $\mathbb L$, 
its LRFS density can be computed as
\begin{align}
{\boldsymbol \pi} \left( {\bf{X}} \right) = \prod\limits_{m = 1}^M {{{\boldsymbol \pi} _m}\left( {\bf{X}}_m \right)} ,   \label{eq:ProdIND}
\end{align}
where ${{\bf{X}}_m}$ is such that ${\cal L}\left( {{{\bf{X}}_m}} \right) = {\cal L}\left( {\bf{X}} \right)\bigcap {{\mathbb L_m}}$,
and ${\cal L}$ denotes the projection from LRFS to its counterpart label set, see \cite[Definition 1]{vo2013labeled}.
%and ${{\boldsymbol \pi} _m} = \left( {{p_m},{f_m}} \right)$ is the LRFS density defined only on label space $\mathbb L_m$.
Since ${\boldsymbol \pi} _m$ itself is an LRFS density, we have
\begin{align}
\int {{{\boldsymbol \pi} _m}\left( {{{\bf{X}}_m}} \right)\delta {{\bf{X}}_m}}  &= 1,  \label{eq:Prop1} \\
{{\boldsymbol \pi} _m}\left( {{{\bf{X}}_m}} \right) &= 0,\;{\rm if}\;{\cal L}\left( {{{\bf{X}}_m}} \right)\bigcap {\left\{ {\mathbb L\backslash {\mathbb L_m}} \right\}}  \ne \emptyset .  \label{eq:Prop2}
\end{align}
For the sake of convenience,
we introduce the shorthand notation ${\boldsymbol \pi}  = \left\{ {\boldsymbol \pi}_m \right\}_{m = 1}^M$.
%Therefore,
%for each agent $i\in {\cal N}$,

Unfortunately, providing 
all the local sub-densities ${\boldsymbol \pi}^i  = \{ {\boldsymbol \pi}_m^i \}_{m = 1}^M$, for $i \in {\cal N}$,
if the MIL rule is directly applied,
the resulting global density 
\begin{align}
\overline {\boldsymbol \pi} \left( {\bf{X}} \right) = \sum\limits_{i \in {\cal N}} \omega^i {\prod\limits_{m = 1}^M {{\boldsymbol \pi} _m^i\left( {{{\bf{X}}_m}} \right)} } 
\end{align} 
would lose independence among label subspaces,
thus providing the difficulties mentioned in Section \ref{sec:diff}.
In this section, similar to finding the ``best" global density 
that belongs to the same family of local ones and leads to MIL,
we propose to find
the ``best" global LRFS density $\overline {\boldsymbol \pi}$ that is independently defined on the label subspaces $\mathbb L_1, \ldots,\mathbb L_M$,
i.e. $\overline {\boldsymbol \pi}  = \left\{ \overline{\boldsymbol \pi}_m \right\}_{m = 1}^M$,
and leads to MIL.
Accordingly, the MIL rule can be properly re-defined as
\begin{align}
\overline{\boldsymbol \pi}  = \arg \mathop {\min }\limits_{\left\{ {{{\boldsymbol \pi} _m}} \right\}_{m = 1}^M} \sum\limits_{i \in {\cal N}} {{\omega ^i}\cdot {D_{\rm KL}}\left( {\left. {\prod\limits_{m = 1}^M {{\boldsymbol \pi} _m^i} } \right\|\prod\limits_{m = 1}^M {{{\boldsymbol \pi} _m}} } \right)} .   \label{eq:MWILInd}
\end{align}
The solution to the revised MIL fusion rule (\ref{eq:MWILInd}) can be found according to the following proposition.
\begin{proposition}\label{Pro:FoID}
	Given local LRFS densities ${\boldsymbol \pi}^i  = \{ {\boldsymbol \pi}_m^i \}_{m = 1}^M$, for $i \in {\cal N}$,
	the ``best" global LRFS density $\overline {\boldsymbol \pi}  = \left\{ \overline{\boldsymbol \pi}_m \right\}_{m = 1}^M$ that 
	is independently defined on $M$ label subspaces, $\mathbb L_1,\ldots,\mathbb L_M$
	and leads to MIL is given by
	\begin{align}
	{\overline {\boldsymbol \pi} _m}\left( {\bf{X}} \right) = \sum\limits_{i \in {\cal N}} {{\omega ^i} \cdot {\boldsymbol \pi} _m^i\left( {\bf{X}} \right)} , \; m = 1,\ldots,M.
	\end{align}
\end{proposition}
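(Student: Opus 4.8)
The plan is to reduce the joint optimization (\ref{eq:MWILInd}) over the factorized family $\{{\boldsymbol \pi}_m\}_{m=1}^M$ to $M$ decoupled sub-problems, each of which is an instance of the unconstrained MIL fusion already solved by (\ref{eq:GP}). The crucial enabling fact is that the KLD between two densities that both factorize over the disjoint label subspaces splits additively across those subspaces. Once this additivity is in hand, the weighted objective separates into a sum of independent terms, and each term can be minimized on its own subspace.

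First I would establish the additivity claim
\begin{align}
D_{\rm KL}\left( {\left. {\prod\limits_{m = 1}^M {{\boldsymbol \pi} _m^i} } \right\|\prod\limits_{m = 1}^M {{{\boldsymbol \pi} _m}} } \right) = \sum\limits_{m = 1}^M {D_{\rm KL}\left( {\left. {{\boldsymbol \pi}_m^i} \right\|{\boldsymbol \pi}_m} \right)} . \nonumber
\end{align}
To do so I would insert the product form (\ref{eq:ProdIND}) into the definition (\ref{eq:KLD}) of the KLD, use $\log \prod_m a_m = \sum_m \log a_m$ to turn the logarithm of the ratio into a sum over subspaces, and then interchange summation and the set integral. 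The key step here is that, because the subspaces $\mathbb{L}_1,\ldots,\mathbb{L}_M$ are mutually disjoint, each LRFS $\bf X$ decomposes uniquely into pieces ${\bf X}_1,\ldots,{\bf X}_M$ with ${\cal L}({\bf X}_m)={\cal L}({\bf X})\cap\mathbb{L}_m$, so that — using the support property (\ref{eq:Prop2}) — the global set integral of a product factorizes into a product of set integrals over the individual subspaces. Applying this factorization to each summand, the integral over every subspace $m'\ne m$ contributes the normalization factor $\int {\boldsymbol \pi}_{m'}^i({\bf X}_{m'})\,\delta{\bf X}_{m'}=1$ of (\ref{eq:Prop1}), leaving exactly $D_{\rm KL}({\boldsymbol \pi}_m^i\|{\boldsymbol \pi}_m)$.

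With the additivity established, the objective in (\ref{eq:MWILInd}) rewrites as
\begin{align}
\sum\limits_{i \in {\cal N}} {{\omega ^i}\sum\limits_{m = 1}^M {D_{\rm KL}\left( {\left. {{\boldsymbol \pi}_m^i} \right\|{\boldsymbol \pi}_m} \right)} }  = \sum\limits_{m = 1}^M {\left[ {\sum\limits_{i \in {\cal N}} {{\omega ^i}D_{\rm KL}\left( {\left. {{\boldsymbol \pi}_m^i} \right\|{\boldsymbol \pi}_m} \right)} } \right]} . \nonumber
\end{align}
Since the $m$-th bracket depends only on ${\boldsymbol \pi}_m$ and the densities $\{{\boldsymbol \pi}_m\}$ may be chosen independently across $m$ (each being a valid LRFS density on its own subspace), the joint minimization decouples into $M$ separate minimizations. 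Each of these is precisely the MIL fusion problem (\ref{eq:MWIL}) posed on the subspace $\mathbb{L}_m$ with fusing densities $\{{\boldsymbol \pi}_m^i\}_{i\in{\cal N}}$, whose minimizer is the weighted arithmetic mean (\ref{eq:GP}). Hence $\overline{\boldsymbol \pi}_m({\bf X}) = \sum_{i\in{\cal N}}\omega^i{\boldsymbol \pi}_m^i({\bf X})$ for each $m$, as claimed. I expect the main obstacle to be the rigorous justification of the set-integral factorization over disjoint label subspaces — in particular verifying that the unique label-induced partition of $\bf X$ makes the cross-terms vanish via (\ref{eq:Prop2}) and collapses the off-diagonal integrals to unity via (\ref{eq:Prop1}), together with the interchange of the finite sum and the set integral; the remaining algebra is routine.
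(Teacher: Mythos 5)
Your proposal is correct and follows essentially the same route as the paper's proof: the paper's chain of equalities in its appendix is precisely your additivity claim (expand the KLD as a set integral, split the log of the product into a sum, factorize the integral across the disjoint subspaces using the normalization and support properties so the off-subspace factors integrate to one), after which the objective decouples into $M$ independent MIL problems each solved by the weighted arithmetic mean via Proposition \ref{pro:P0}. The only cosmetic difference is that you isolate the KLD additivity as a standalone lemma before decoupling, whereas the paper performs both steps in one displayed chain of equalities.
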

Proof: see Appendix \ref{app:P3}.

\begin{remark}
	It should be noticed that
	if the GCI fusion rule is adopted to fuse local densities that are
	independently defined on label subspaces,
	the resulting global density turns out to be independently defined on the same label subspaces.
	To see this,
	let us compute the global density following the GCI rule as follows
	\begin{align}
	{\boldsymbol \pi} \left( {\bf{X}} \right) &= \frac{{\prod\limits_{i \in {\cal N}} {\prod\limits_{m = 1}^M {{{\left[ {{\boldsymbol \pi} _m^i\left( {{{\bf{X}}_m}} \right)} \right]}^{{\omega ^i}}}} } }}{{\int { \cdots \int {\prod\limits_{i \in {\cal N}} {\prod\limits_{m = 1}^M {{{\left[ {{\boldsymbol \pi} _m^i\left( {{{\bf{X}}_m}} \right)} \right]}^{{\omega ^i}}}\delta \left\{ \bigcup\limits_{m = 1}^M {{{\bf{X}}_m}} \right\} } } } } }}   \nonumber  \\
	&= \prod\limits_{m = 1}^M {\frac{{\prod\limits_{i \in {\cal N}} {{{\left[ {{\boldsymbol \pi} _m^i\left( {{{\bf{X}}_m}} \right)} \right]}^{{\omega ^i}}}} }}{{\int {\prod\limits_{i \in {\cal N}} {{{\left[ {{\boldsymbol \pi} _m^i\left( {{{\bf{X}}_m}} \right)} \right]}^{{\omega ^i}}}} \delta {{\bf{X}}_m}} }}} .
	\end{align}
	Defining
	\begin{align}
	{{\boldsymbol \pi} _m}\left( {{{\bf{X}}_m}} \right) = \frac{{\prod\limits_{i \in {\cal N}} {{{\left[ {{\boldsymbol \pi} _m^i\left( {{{\bf{X}}_m}} \right)} \right]}^{{\omega ^i}}}} }}{{\int {\prod\limits_{i \in {\cal N}} {{{\left[ {{\boldsymbol \pi} _m^i\left( {{{\bf{X}}_m}} \right)} \right]}^{{\omega ^i}}}} \delta {{\bf{X}}_m}} }},
	\end{align}
	the above conclusion can be immediately drawn.
\end{remark}

\subsection{Decomposition of LRFS densities}
Previous sections have shown that
if the global label space $\mathbb L$ is made up of $M$ mutually disjoint label subspaces
and local sub-densities for the corresponding label subspaces have been properly defined,
the fused density can be found by performing fusion with respect to the sub-densities
on each label subspace.
However, in practice the local density ${\boldsymbol \pi}^i$ at each agent $i \in {\cal N}$
is defined within its whole FoV,
thus is not equal to the product of sub-densities defined on the label subspaces.
In this subsection,
we seek for a method to factorize an LRFS density ${\boldsymbol \pi}$ 
into $M$ mutually independent sub-densities defined on label subspaces
by minimizing  the KLD from the re-constructed density to the original one, as shown in the following Proposition.

\begin{proposition} \label{pro:dec}
	Suppose that a general LRFS density ${\boldsymbol \pi}=(p,f)$ is defined on the label space $\mathbb L$.
	Then, the best decomposition of ${\boldsymbol \pi}$ into $M$ sub-densities $\left\{ {\boldsymbol \pi}_m \right\}_{m = 1}^M$ 
	defined on $M$ mutually disjoint label spaces $\mathbb L_1, \ldots,\mathbb L_M$ 
	minimizing the KLD from the re-constructed density (\ref{eq:ProdIND}) to the original one
	can be found as
	${{\boldsymbol \pi} _m} = \left( {{p_m},{f_m}} \right)$ given by
	\begin{align}
	{p_m}\left( {{L_m}} \right) &= \sum\limits_{L:L \supseteq L_m} {p\left( L \right)} ,  \label{eq:decwei} \\
	{f_m}\left( {\left. {{X_m}} \right|{L_m}} \right) &= \arg \mathop {\min }\limits_{f'} \sum\limits_{L:L \supseteq L_m} {\tilde \omega \left( L \right){D_{\rm KL}}\left( {\left. {{{\tilde f}_m}} \right\|f'} \right)} ,  \label{eq:decspdf}
	\end{align}
	where
	\begin{align}
	\bar \omega \left( L \right) &= \frac{{p\left( L \right)}}{\sum\limits_{L':L' \supseteq L_m} {p\left( {L'} \right)} }, \\
	{{\tilde f}_m}\left( {\left. {{X_m}} \right|L} \right) &= \int {f\left( {\left. X \right|L} \right)d \left( {X\backslash {X_m}} \right)} .
	\end{align}
\end{proposition}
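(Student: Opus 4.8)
The plan is to expand the objective $D_{\rm KL}\!\left(\boldsymbol{\pi}\,\|\,\prod_{m=1}^{M}\boldsymbol{\pi}_m\right)$ as a labeled set integral and to show that, up to terms not depending on the unknowns, it splits into one part involving only the sub-JEPs $\{p_m\}$ and one part involving only the sub-CJPDFs $\{f_m\}$, so the two families can be optimized separately and subspace-by-subspace.

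First, using (\ref{eq:LRFSN}) and (\ref{eq:ProdIND}) and writing, for each global label set $L\subseteq\mathbb{L}$, $L_m=L\cap\mathbb{L}_m$ and $X_m$ for the kinematic states carried by $L_m$,
\[
D_{\rm KL}\!\left(\boldsymbol{\pi}\,\Big\|\,\prod_{m=1}^{M}\boldsymbol{\pi}_m\right)=\sum_{L\subseteq\mathbb{L}}\int p(L)\,f(X|L)\,\log\frac{p(L)\,f(X|L)}{\prod_{m}p_m(L_m)\,f_m(X_m|L_m)}\,dX .
\]
Splitting the logarithm into a JEP ratio and a CJPDF ratio and using $\int f(X|L)\,dX=1$ reduces the objective to $A(\{p_m\})+B(\{f_m\})$, where $A=\sum_{L}p(L)\log\frac{p(L)}{\prod_m p_m(L_m)}$ and $B=\sum_{L}p(L)\int f(X|L)\log\frac{f(X|L)}{\prod_m f_m(X_m|L_m)}\,dX$. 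The key structural observation is that $p_m$ appears only in $A$ and $f_m$ only in $B$, so the minimization decouples across the two families and across the index $m$.

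For the JEP part I would discard the constant $\sum_L p(L)\log p(L)$ and regroup the remaining double sum according to the value of $L_m$, giving $A=\text{const}-\sum_m\sum_{L_m\subseteq\mathbb{L}_m}\hat p_m(L_m)\log p_m(L_m)$, where $\hat p_m(L_m)=\sum_{L:\,L\cap\mathbb{L}_m=L_m}p(L)$ is the marginal JEP of the subspace-$m$ projection (summing to $1$ over $L_m$, which is exactly why restricting to $L\cap\mathbb{L}_m=L_m$ makes $p_m$ a normalized JEP). Each inner sum is a cross-entropy, so by the nonnegativity of the KLD it is minimized over normalized $p_m$ precisely at $p_m=\hat p_m$, i.e.\ at (\ref{eq:decwei}). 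For the CJPDF part I would marginalize each factor, $\int f(X|L)\log f_m(X_m|L_m)\,dX=\int\tilde f_m(X_m|L)\log f_m(X_m|L_m)\,dX_m$ with $\tilde f_m(X_m|L)=\int f(X|L)\,d(X\backslash X_m)$, then drop the $f_m$-independent term and regroup by $(m,L_m)$; this recasts $B$ as $\sum_m\sum_{L_m}\hat p_m(L_m)\sum_{L:\,L\cap\mathbb{L}_m=L_m}\bar\omega(L)\,D_{\rm KL}\!\left(\tilde f_m\,\|\,f_m\right)+\text{const}$ with $\bar\omega(L)=p(L)/\hat p_m(L_m)$, which is exactly the weighted sum of KLDs in (\ref{eq:decspdf}). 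Since each $\tilde f_m(\cdot|L)$ is a proper density, this is a linear-opinion-pool/MIL problem of the type solved in Proposition \ref{pro:P0}, and minimizing it separately for each $(m,L_m)$ over normalized $f_m(\cdot|L_m)$ yields (\ref{eq:decspdf}).

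The main obstacle I anticipate is the careful bookkeeping of the labeled set integral together with the marginalization $\tilde f_m$: one must verify that the objective genuinely decouples into $A$ and $B$ (no cross terms between $p_m$ and $f_m$), that the regrouping by $L_m$ is exact, and that each regrouped inner problem is posed over a properly normalized JEP or density, so that the optimum is attainable and the Gibbs/nonnegativity-of-KLD argument applies with equality.
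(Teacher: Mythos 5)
Your proposal is correct and follows essentially the same route as the paper's own proof: expand the set-integral KLD, split the logarithm into a JEP part and a CJPDF part (which decouple since the former involves only $\{p_m\}$ and the latter only $\{f_m\}$), regroup each sum by the subspace projection $L\cap\mathbb{L}_m=L_m$, and then identify the JEP minimizer (\ref{eq:decwei}) via nonnegativity of the KLD and the CJPDF problem (\ref{eq:decspdf}) as the weighted-KLD (MIL) problem handled by Proposition~\ref{pro:P0}. The only differences are cosmetic—you make explicit the Gibbs-inequality step, the absorption of the $(M-1)\sum_L p(L)\log p(L)$-type constants, and the marginalization to $\tilde f_m$, all of which the paper leaves implicit.
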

Proof: see Appendix \ref{app:P5}. \qed 

A  similar splitting of (\ref{eq:decwei}) in Proposition \ref{pro:dec} can be found for the $\delta$-GLMB density
in \cite{beard2018solution} where the aim is to deal with large-scale multitarget tracking with a single sensor.
Here Proposition \ref{pro:dec} provides the following extensions with respect to \cite{beard2018solution}:
\begin{itemize}
	\item decomposition of an arbitrary LRFS density; % can be performed; %with the results of Proposition \ref{pro:dec};
	\item more importantly,
	by means of  (\ref{eq:decspdf}) in
	Proposition \ref{pro:dec},
	the CJPDFs of the decomposed LRFS densities are also provided,
	while only computation of the JEPs is addressed in \cite{beard2018solution}.
\end{itemize}

Please notice that the CJPDFs $f_m$ of the sub-densities $\boldsymbol{\pi}_m$ are not given explicitly by (\ref{eq:decspdf}) 
but as the result of the minimization of the MIL criterion.
%Note that such optimization problem is consistent with the MIL rule given in Proposition \ref{pro:P0}.
Thus,
Proposition \ref{pro:dec} can be easily extended to any specific class of  LRFS densities.
For instance,
for M$\delta$-GLMB densities whose CJPDF is independent among tracks,
$f_m$ is computed by directly applying (\ref{eq:Ff}).
Furthermore,
according to (\ref{eq:decwei}),
if ${\boldsymbol \pi}$ is decomposed to a label space $\mathbb L_{m'}$ such that $\mathbb L_{m'} \cap \mathbb L = \emptyset$,
for instance ${\boldsymbol \pi}^1$ is decomposed to $\mathbb L^2 \backslash( \mathbb L^1  \cap \mathbb L^2)$ in the example of Fig. \ref{Fig:EXA},
the resulting sub-density ${\boldsymbol \pi}_{m'}$ will always be null given any LRFS, i.e. 
${\boldsymbol \pi}_{m'}({\bf X}) = 0$ for ${\cal L}({\bf X}) \subseteq {\mathbb L}_{m'}$.

\begin{remark}
	Due to the fact that the Bernoulli components (BCs) of an LMB density are mutually independent,
	i.e. the LMB density is by construction decomposed into $|\mathbb L|$ subspaces 
	where each subspace has only one label,
	the MIL fusion rule can directly be adopted to fuse LMB densities defined in different FoVs.
\end{remark}

\section{Implementation issues}
\subsection{Fusion of CJPDFs}
It has been shown in Propositions \ref{pro:P0} that
MIL fusion of LRFS densities 
amounts to separately fusing the JEPs and CJPDFs.
Since the JEP is a discrete density,
fusion of JEPs is quite straightforward.
%Whereas, in practice,
%the CJPDF of LRFS density is often assumed to be approximately represented by Gaussian mixtures (GMs) 
%or particle sets \cite{vo2014labeled},
%thus additional cares should be taken to promote the overall performance.
In this subsection, implementation issues relative to MIL fusion of CJPDFs are discussed.
Since
fusion of M$\delta$-GLMB and LMB densities is of particular interest in practice,
and MIL fusion of CJPDFs of these two densities
is carried out independently of labels (see Propositions \ref{pro:P1} and \ref{pro:P2}),
we focus on the implementation of MIL fusion on a single label $l$ with local PDFs given as $f_l^i$, for $i \in {\cal N}$.
Notice that, in practice,
the PDF of a label is often assumed to be approximately represented by a \textit{Gaussian mixture} (GM) 
or a \textit{particle set} (PS) \cite{vo2014labeled}.
In the rest of this subsection,
the implementation issues relative to these two representations are separately discussed.

\textbf{Fusion with GMs:} 
Suppose now that  the PDF $f^i_l$ is approximated by a GM as
\begin{align}
f_l^i\left( x \right) \cong \sum\limits_{m = 1}^{J^i_l} {\alpha ^{i,m}{\cal G}\left( {x;\mu ^{i,m},P^{i,m}} \right)} ,
\end{align}
where ${\cal G}\left( {x;\mu ,P} \right)$ denotes a Gaussian PDF with mean $\mu$ and covariance matrix $P$.
Then, the PDF of the fused RFS density is given by
\begin{align}
{f_l}\left( x \right) = \sum\limits_{i \in {\cal N}} {\sum\limits_{m = 1}^{J_l^i} {{{\tilde \omega }^i}\alpha ^{i,m}{\cal G}\left( {x;\mu ^{i,m},P^{i,m}} \right)} } .
\end{align}
where ${\tilde \omega }^i$ is computed via (\ref{eq:OmegaMGLMB}) if local LRFS densities are M$\delta$-GLMB
or (\ref{eq:OmegaLMB}) if local LRFS densities are LMB.
Note that the number of \textit{Gaussian components} (GCs) increases to ${\sum\nolimits_{i \in {\cal N}} {J^i_l} }$ after fusion, 
which leads to an increase of computational burden.
Hence, suitable pruning and merging procedures \cite[Table II]{vo2006gaussian} should be performed in order to reduce the number of GCs.

\textbf{Fusion with PSs:}
Suppose that the PDF $f_l^i$ is approximated by a set of particles as 
\begin{align}
f_l^i\left( x \right) \cong \sum\limits_{m = 1}^{J_l^i} {\alpha^{i,m}{\delta _{x^{i,m}}}\left( x \right)} ,
\end{align}
where ${\delta _x}(\cdot)$ is the Dirac delta centered at $x$..
Then, the fused PDF is given by
\begin{align}
{f_l}\left( x \right) = \sum\limits_{i \in {\cal N}} {{{\tilde \omega }^i}f_l^i\left( x \right)}  = \sum\limits_{i \in {\cal N}} {\sum\limits_{m = 1}^{J_l^i} {{{\tilde \omega }^i}\alpha ^{i,m}{\delta _{x^{i,m}}}\left( x \right)} } ,  \label{eq:SMCfus}
\end{align}
Similarly to GM implementation, the number of particles increases to ${\sum\nolimits_{i \in {\cal N}} {J_l^i} }$ after fusion via (\ref{eq:SMCfus}),
thus leading to an increase of computational load at the next time instance.
Then, a resampling step \cite[Section III-F]{vo2005sequential} should be performed to select a total amount of $J_l$ (which can be determined by the corresponding JEP of the label set) particles.

\begin{remark}
	When performing GCI fusion with GM implementation, 
	the need arises to approximately compute the power of GMs.
	Although there exist approximate methods \cite{gunay2016chernoff} to accomplish such a task with satisfactory accuracy,
	a non negligible extra computational load  is required to perform such approximation.
	By contrast, MIL fusion of GMs directly provides a fused GM without any approximation,
	thus providing enhanced accuracy and computational savings.
\end{remark}

\begin{remark}
	Normally, a huge number of particles is required to reasonably approximate the PDF,
	thus implying heavy transmission load.
	In order to reduce communication bandwidth within the WSN,
	one can further approximate particle sets by GMs with reduced number of GCs \cite{li2018local}.
	In this way, fusion can be performed via GM implementation on the approximated GMs.
	After fusion, the resulting GM can be converted back to SMC representation by mean of a suitable sampling method \cite{li2018local}.
\end{remark}

\subsection{Solving the label mismatching problem}  \label{sec:LM}
The MIL fusion of LRFS densities proposed in Section \ref{sec:MWIL} is based on the assumption that
all the involved local LRFS densities are defined on the same label space.
As a matter of fact,
such assumption is impractical in many applications,
for instance:
\begin{itemize}
	\item[-] when the tracks are initialized by the adaptive birth model \cite{ristic2012adaptive} at each agent (with different number of measurements at each time),
	the numbers of birth BCs at each time are different,
	thus it is not possible to ensure to assign the same track with the same label;
	\item[-] even though tracks are initialized with the same prior information at each agent,
%	since the pruning strategy \cite{vo2014labeled,reuter2014labeled} are normally adopted at local filter to control the number of computational load,
	because of target miss-detections  and false alarms,
	it is also difficult to ensure matching of the label sets of all agents.
\end{itemize}
Hence, the practical implementation of MIL fusion of LRFS densities must be able to solve also the label mismatching problem.
%In this section, we aim to solve the label mismatching problem among LRFS densities.
It has been shown in \cite{li2019computationally} that,
for a non-LMB density, it is convenient to find the ``best" LMB approximation \cite[Algorithm 1]{garcia2016track}
and then perform label matching among LMB densities.

Let us therefore consider the problem of label matching between two LMB densities 
${\boldsymbol \pi} _\beta ^1 = {\left\{ {\left( {r_l^1,f_l^1} \right)} \right\}_{l \in {{\mathbb L}^1}}}$ and 
${\boldsymbol \pi} _\beta ^2 = {\left\{ {\left( {r_l^2,f_l^2} \right)} \right\}_{l \in {{\mathbb L}^2}}}$.
Without loss of generality, 
it is assumed that $\left| {{{\mathbb L}^1}} \right| \ge \left| {{{\mathbb L}^2}} \right|$.
It has been shown in \cite{li2019computationally} how
associating the track labels of two LMB densities can be achieved by solving  a \textit{ranked assignment problem} (RAP).
To this end, a cost (square) matrix ${\cal C}$ with dimension $\left| {{{\mathbb L}^1}} \right|$ 
(i.e. the larger label space cardinality)
is constructed,
in which the value of each element ${{c_{n_1,n_2}}}$ for $n_1=1,\cdots,\left| {{{\mathbb L}^1}} \right|$ and
$n_2=1,\ldots,\left| {{{\mathbb L}^2}} \right|$
is defined as the so-called GCI divergence ${D_{\rm GCI}}\left( {\left. {{l_{{n_1}}}} \right\|{l_{{n_2}}}} \right)$ (i.e. the cost when performing label-wise GCI fusion between the BC with label $l_{n_1}$ in ${\boldsymbol \pi} _\beta ^1$
and the BC with label $l_{n_2}$ in ${\boldsymbol \pi} _\beta ^2$, see \cite[Appendix]{battistelli2013consensus}) 
given by
\begin{align}
& {D_{\rm GCI}}\left(l_{n_1},l_{n_2} \right) \nonumber \\
&=  - \log \left[ {{{\left( {1 - r_{{l_{{n_1}}}}^1} \right)}^{{\omega ^1}}}{{\left( {1 - r_{{l_{{n_2}}}}^2} \right)}^{{\omega ^2}}} + {{\left( {r_{{l_{{n_1}}}}^1} \right)}^{{\omega ^1}}}{{\left( {r_{{l_{{n_2}}}}^2} \right)}^{{\omega ^2}}}} \right. \nonumber  \\
& \quad  \times \left. {\int {{{\left[ {f_{{l_{{n_1}}}}^1\left( x \right)} \right]}^{{\omega ^1}}}{{\left[ {f_{{l_{{n_2}}}}^2\left( x \right)} \right]}^{{\omega ^2}}}dx} } \right].   \label{eq:GCIdiv}
\end{align}
Note that the label set of ${\boldsymbol \pi} _\beta ^2$ is compensated by $\left| {{{\mathbb L}^1}} \right| - \left| {{{\mathbb L}^2}} \right|$ 
virtual tracks with EPs equal to zero.
With such definition,
the tracks between two label sets are matched
by finding the best assignment based on the cost matrix $\cal C$,
and such optimization problem can be solved within polynomial time adopting the Hungarian algorithm \cite{kuhn1955hungarian}.
This idea implies that every BC in the LMB density with smaller label space cardinality  (i.e. ${\boldsymbol \pi} _\beta ^2$)
will definitely be associated with a BC in the other one (i.e. ${\boldsymbol \pi} _\beta ^1$).
This method works well whenever all agents have the same FoV
and high probability of detection 
(i.e., additional BCs in ${\boldsymbol \pi} _\beta ^1$ have a high probability to be originated from clutter).
However, it has the following limitations:
\begin{itemize}
	\item[-] it cannot be adopted to handle the situation where agent FoVs are different
	since, in such a case, BCs inside the exclusive FoV of ${\boldsymbol \pi} _\beta ^2$ should not be associated to any BC in ${\boldsymbol \pi} _\beta ^1$;
	\item[-] the GCI divergence is strongly affected by the EPs of BCs, 
	as shown in Example \ref{exa:LM}.
\end{itemize}

\begin{figure}[tb]
	\centering {
		\begin{tabular}{ccc}
			\includegraphics[width=0.3\textwidth]{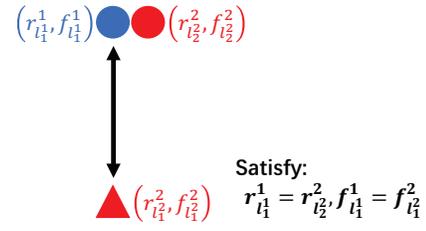}\\
		\end{tabular}
	}
	\caption{Example of two LMB densities}
	\vspace{-0.5\baselineskip}
	\label{Fig:EXA}
\end{figure}

\begin{example}  \label{exa:LM}
	Suppose that ${\boldsymbol \pi} _\beta ^1$ (with fusion weight $\omega$) consists of a single BC and
	${\boldsymbol \pi} _\beta ^2$ (with fusion weight $1-\omega$) consists of two BCs, where
	$f_{{l_1^1}}^1 = f_{{l_2^2}}^2,r_{{l_1^1}}^1 = r_{{l_1^2}}^2$
	as shown in Fig. \ref{Fig:EXA}.
	Note that this situation could happen when
	both tracks $l_1^1$ in ${\boldsymbol \pi} _\beta ^1$ 
	and $l_1^2$ in ${\boldsymbol \pi} _\beta ^2$ 
	are miss-detected.
	In practice, it is desired to match ${l_1^1}$ with ${l_2^2}$,
	i.e. ${D_{\rm GCI}}\left( {l_1^1,l_2^2} \right) < {D_{\rm GCI}}\left( {l_1^1,l_1^2} \right)$,
	due to the fact that $l_1^1$ and $l_2^2$ are located at the same position.
	However, mismatching happens when
	\begin{align}
	{C_\omega }\left( {\beta \left( {r_{l_1^1}^1} \right),\beta \left( {r_{l_2^2}^2} \right)} \right) \le 1 - r_{l_1^1}^1 + r_{l_1^1}^1 \cdot {C_\omega }\left( {f_{l_1^1}^1,f_{l_1^2}^2} \right),  \label{eq:MIS}
	\end{align}
	where ${\beta \left( r \right)}$ represents a Bernoulli distribution with probability $r$
	and $C_\omega$ denotes the Chernoff $\omega$-coefficient defined as \cite{nielsen2011chernoff}
	\begin{align}
	{C_\omega }\left( {{f^1},{f^2}} \right) = \int {{{\left[ {{f^1}\left( x \right)} \right]}^\omega }{{\left[ {{f^2}\left( x \right)} \right]}^{1 - \omega }}dx},
	\end{align}
	with the integral replaced by summation when $f^1$ and $f^2$ are defined over a discrete space (e.g. Bernoulli distribution).
	The proof of (\ref{eq:MIS}) is omitted since it can be directly obtained substituting the parameters of BCs into the corresponding definitions.
	Due to the fact that $0 \le C_\omega \le 1$,
	and $C_\omega \left( {{f^1},{f^2}} \right)$ tends to  $1$ when $f^1$ and $f^2$ are similar,
	it can be seen immediately that when $r_{l_1^1}^1$ is extremely low, the right-hand-side of (\ref{eq:MIS})
	will be close to $1$, which means that 
	mismatching might happen when there exist miss-detections among agents.
\end{example}

Therefore, in this subsection,
we propose to solve the label mismatching problem 
by constructing a modified RAP.
%also by a different rank assignment strategy.
Specifically, tye following cost matrix ${\cal C}$ with dimension $\left( {\left| {{{\mathbb L}^1}} \right| + 1} \right) \times \left( {\left| {{{\mathbb L}^2}} \right| + 1} \right)$ is defined:
\begin{align}
{\cal C} = \left[ {\begin{array}{*{20}{c}}
	{{c_{1,1}}}& \cdots &{{c_{1,\left| {{{\mathbb L}^2}} \right| + 1}}}\\
	\vdots & \ddots & \vdots \\
	{{c_{\left| {{{\mathbb L}^1}} \right| + 1,1}}}& \ldots &{{c_{\left| {{{\mathbb L}^1}} \right| + 1,\left| {{{\mathbb L}^2}} \right| + 1}}}
	\end{array}} \right],
\end{align}
in which the entry $c_{n_1,n_2}$ represents the cost of assigning
the BC $( {r_{{l_{{n_1}}}}^1,f_{{l_{{n_1}}}}^1})$ of ${\boldsymbol \pi} _\beta ^1$
to the BC $( {r_{{l_{{n_2}}}}^2,f_{{l_{{n_2}}}}^2})$ of ${\boldsymbol \pi} _\beta ^2$.
Further $c_{n_1,{\left| {{{\mathbb L}^2}} \right| + 1}}$ denotes 
the cost of regarding $( {r_{{l_{{n_1}}}}^1,f_{{l_{{n_1}}}}^1})$ of ${\boldsymbol \pi} _\beta ^1$ as unassociated while
$c_{{\left| {{{\mathbb L}^1}} \right| + 1},n_2}$ denotes the cost of regarding $( {r_{{l_{{n_2}}}}^1,f_{{l_{{n_2}}}}^2})$
of ${\boldsymbol \pi} _\beta ^2$ as unassociated.
Finally, we artifically set ${c_{\left| {{{\mathbb L}^1}} \right| + 1,\left| {{{\mathbb L}^2}} \right| + 1}} = \infty$.

Motivated by the above mentioned limitations of GCI divergence,
we define the entry $c_{n_1,n_2}$ as the divergence 
that considers only the PDF of the BCs, i.e.
\begin{align}
{c_{{n_1},{n_2}}} \hspace{-0.1cm} \buildrel \Delta \over = \hspace{-0.1cm} \left\{ \begin{array}{l} \hspace{-0.2cm}
D\left( {f_{{l_{{n_1}}}}^1,f_{{l_{{n_2}}}}^2} \right),\; 1 \le {n_1} \le \left| {{\mathbb L^1}} \right| \,\& \, 1 \le {n_2} \le \left| {{\mathbb L^2}} \right|\\
\hspace{-0.2cm} \infty , \; \quad \quad \quad \quad \quad \,{n_1} = \left| {{\mathbb L^1}} \right| + 1 \,\& \, {n_2} = \left| {{\mathbb L^2}} \right| + 1\\
\hspace{-0.2cm} {T_D},\; \quad \quad \quad \quad \quad {\rm otherwise}
\end{array} \right. \hspace{-0.25cm} ,
\end{align}
where $T_D$ is the matching threshold that represents the largest PDF divergence
that the same target could have among agents,
and $D(\cdot)$  represents an information-theoretic discrepancy among PDFs.
There are several candidates that can be adopted to this end,
such as:
\begin{itemize}
	\item[-] Jensen-Shannon divergence $D_{\rm JS}$, which is also known as the symmetric KLD, and is defined as
	\begin{align}
	{D_{\rm JS}}\left( {{f^1},{f^2}} \right) = \frac{1}{2}\left[ {{D_{\rm KL}}\left( {\left. {{f^1}} \right\|{f^2}} \right) + {D_{\rm KL}}\left( {\left. {{f^2}} \right\|{f^1}} \right)} \right];  \label{eq:DJS}
	\end{align}
	\item[-] Cauchy-Schwarz divergence $D_{\rm CS}$, which is defined as
	\begin{align}
	{D_{\rm CS}}\left( {{f^1},{f^2}} \right) =  - \log \left\{ {\frac{{\int {{f^1}(x){f^2}(x)dx} }}{{\sqrt {\int {{{\left[ {{f^1}(x)} \right]}^2}dx}  \cdot \int {{{\left[ {{f^2}(x)} \right]}^2}dx} } }}} \right\} .   \label{eq:DCS}
	\end{align}
\end{itemize}
\begin{remark}
	Concerning the computation of  information-theoretic discrepancies, 
	the following facts needs to be clarified.
	\begin{enumerate}[1)]
		\item When the PDFs of local LMB densities are approximately represented with GMs,
		the CSD between PDFs can be computed analytically  while, on the other hand,
		the computation of the KLD doe not admit an analyitical form.
		In the latter case, an approximate solution can be obtained with the aid of a sigma-point representation of the GMs; 
		the details can be found in \cite[Appendix A]{gao2018event};
		\item When the PDFs of local LMB densities are approximately represented with particle sets,
		both KLD and CSD cannot be accurately computed unless
		a sufficient amount of particles among the involved PDFs are overlapped.
		Therefore, in this case,
		it is suggested to further approximate the particle sets by GMs \cite{li2018local} 
		and then adopt the method discussed in 1).
	\end{enumerate}
\end{remark}

In order to better illustrate the proposed strategy, 
it is useful to define the assignment matrix $\cal S$ as
\begin{align}
{\cal S} = \left[ {\begin{array}{*{20}{c}}
	{{s_{1,1}}}& \cdots &{{s_{1,\left| {{{\mathbb L}^2}} \right| + 1}}}\\
	\vdots & \ddots & \vdots \\
	{{s_{\left| {{{\mathbb L}^1}} \right| + 1,1}}}& \ldots &{{s_{\left| {{{\mathbb L}^1}} \right| + 1,\left| {{{\mathbb L}^2}} \right| + 1}}}
	\end{array}} \right],
\end{align}
where $s_{n_1,n_2} = 1$ 
if BC $( {r_{{l_{{n_1}}}}^1,f_{{l_{{n_1}}}}^1})$
is assigned to $( {r_{{l_{{n_2}}}}^2,f_{{l_{{n_2}}}}^2})$
and otherwise $s_{n_1,n_2} = 0$.
Note that, $s_{n_1,{\left| {{{\mathbb L}^2}} \right| + 1}}=1$ means 
$( {r_{{l_{{n_1}}}}^1,f_{{l_{{n_1}}}}^1})$ remains unassigned and similarly
$s_{{\left| {{{\mathbb L}^1}} \right| + 1},n_2}  = 1$ that $( {r_{{l_{{n_2}}}}^1,f_{{l_{{n_2}}}}^2})$
is unassigned;
moreover, ${s_{\left| {{{\mathbb L}^1}} \right| + 1,\left| {{{\mathbb L}^2}} \right| + 1}} \equiv 0$.
Then, the problem turns out to find the best assignment ${\cal S}^*$ that minimizes the global cost, i.e.
\begin{align}
{{\cal S}^*} \buildrel \Delta \over = \arg \mathop {\min }\limits_{\cal S} \sum\limits_{{n_1} = 1}^{\left| {{\mathbb L^1}} \right| + 1} {\sum\limits_{{n_2} = 1}^{\left| {{\mathbb L^2}} \right| + 1} {{s_{{n_1},{n_2}}} \cdot {c_{{n_1},{n_2}}}} }  = \arg \mathop {\min }\limits_{\cal S} {\rm tr}\left( {{{\cal S}^ \top }{\cal C}} \right), \label{eq:RAP}
\end{align}
where ${\rm tr}(\cdot)$ denotes the trace of a matrix.
Such a linear assignment problem can be efficiently solved in polynomial time 
by the Hungarian algorithm 
%\cite{kuhn1955hungarian}.
%The Hungarian algorithm has been widely applied together with the multiple hypothesis tracker \cite{blair2000multitarget}
%and the GLMB-series filters \cite{vo2014labeled} to track multiple targets.
%Since it is a quite mature technique,
%here we will not go in details for this algorithm.
%The details can be found in
\cite{blair2000multitarget,vo2014labeled}. 

\subsection{Application of MIL fusion in the context of DMT}

One of the most important applications 
of  multi-object fusion is \textit{distributed multitarget tracking} (DMT).
In this subsection,
details of applying MIL fusion to DMT are provided.
%Recall the agent set of a multi-agent system (MAS) as $\cal N$,
%which consists of $|{\cal N}|$ agents.
The considered LRFS approach to DMT considered in this paper
consists of the following two steps recursively performed at each time $t$:
\begin{enumerate}[1)]
	\item {\bf Local filtering}. Each agent $i \in {\cal N}$,
	provided with prior ${\boldsymbol \pi}_{t-1}$
	and measurements obtained through an imperfect extraction process,
	(i.e. featuring target miss-detections and false alarms) runs 
	a multitarget tracker \cite{vo2013labeled,fantacci2015marginalized,reuter2014labeled,vo2014labeled} in order to get the local posterior ${\boldsymbol \pi}^i_{t|t}$.
	\item {\bf Information aggregation}. Based on step 1), 
	local posteriors of all agents are collected at the fusion center (or shared by a broadcast protocol like consensus \cite{xiao2007distributed})
	and then the multi-object density fusion algorithm is employed to fuse local posteriors ${\boldsymbol \pi}^i_{t|t},\; i \in {\cal N}$,  into the global density ${\boldsymbol \pi}_t$,
	and then ${\boldsymbol \pi}_t$ is utilized as  prior information for the local filtering of next iteration at each node $i\in{\cal N}$.
\end{enumerate}

%\textcolor{blue}{Since local filtering is out of the scope of this paper,
%it is assumed that
%the local posterior ${\boldsymbol \pi}_{t|t}^i$ of each agent $i \in {\cal N}$ is 
%known at each time $t$.}

In the context of DMT,
if all agents have the same FoV,
fusion can be performed directly with the proposed MIL rule,
otherwise local LRFS densities will have to be decomposed into mutually independent sub-densities
defined on suitable label subspaces
and MIL fusion is performed subspace-by-subspace.
If the local FoV %$\mathbb L^i$} 
of each agent $i \in {\cal N}$ is known,
the label subspaces can be obtained at every recursion
by looking for the closed region of the global label space.
For instance, in the example of Fig. \ref{Fig:TDDFoV},
the subspaces could be $\mathbb L_1 \buildrel \Delta \over = \mathbb L^1 \backslash ( \mathbb L^1  \cap \mathbb L^2)$,
$\mathbb L_2 \buildrel \Delta \over = \mathbb L^1  \cap \mathbb L^2$,
and $\mathbb L_3 \buildrel \Delta \over = \mathbb L^2 \backslash ( \mathbb L^1  \cap \mathbb L^2)$.
However, in practice,
it is more desirable to develop fusion rules for agents that have limited but unknown FoVs,
due to the facts that:
\begin{itemize}
	\item[-] affected by the physical conditions of  the AoI (e.g. rain, fog, etc.),
	it is hard to precisely define the FoV of each agent;
	\item[-] in some specific MAS like \textit{wireless sensor networks} (WSNs),
	the agents are powered by batteries so that as far as energy
	is consumed, the agent FoV is time-varying.
\end{itemize}

Notice that
if each agent performs well in local filtering,
the tracks within its local FoV can be correctly detected after few time recursions.
In this sense,
it is straightforward to define the label subspaces by comparing the labels
that are involved in each local LRFS densitiy
(conditioned on the fact that all the local labels have been correctly matched using the method of Section \ref{sec:LM}).
%\textcolor{blue}{
For instance again in Fig. \ref{Fig:EXA},
where ${\boldsymbol \pi}^1$ involves $l_1$ and $l_2$
while ${\boldsymbol \pi}^2$ involves $l_1$ and $l_3$,
both local densities contain track $l_1$
and $l_2, l_3$ are their respective exclusive tracks.
Then it is straightforward to define
$\mathbb L_1 = \{l_1\}$, $\mathbb L_2 = \{l_2\}$, $\mathbb L_3 = \{l_3\}$.
%}

Note that,
as far as fusion is performed,
compensated by local densities of other agents,
each agent acquires the information outside its local FoV.
As a result,
the local label space of each agent includes more and more tracks
as far as DMT is implemented.
Hence,
label subspaces should be re-defined
whenever fusion is going to be performed.
By considering all the mentioned factors,
the proposed DMT approach is outlined in Algorithm \ref{alg:DMT}.

\begin{algorithm}  
	%\SetAlgoNoLine
	\caption{DMT with LRFS (at time $t$)}
	\label{alg:DMT}
	\KwIn {${\boldsymbol \pi}_{t - 1}$} 
	Carry out local filtering (see \cite{vo2013labeled,fantacci2015marginalized,reuter2014labeled,vo2014labeled}) at each agent $i \in {\cal N}$ to compute local posteriors ${\boldsymbol \pi}_{t|t}^i$\;
	For each agent $i\in {\cal N}$, broadcast its local posterior to the fusion center\;
	Match all the involved track labels using the method illustrated in Section \ref{sec:LM}\;
	Fuse local posteriors ${\boldsymbol \pi}_{t|t}^i$ into the global density ${\boldsymbol \pi}_t$\;
	Transmit ${\boldsymbol \pi}_t$ back to each agent $i \in {\cal N}$.\\  % and set ${\boldsymbol \pi}_t^i = {\boldsymbol \pi}_t$. \\
	\KwOut{${\boldsymbol \pi}_t$}
\end{algorithm}  

\begin{remark}  \label{rem:GLMB}
	Though $\delta$-GLMB densities can be analytically fused 
	under the MIL criterion,
	the number of association hypotheses resulting in the global density 
	increases to $\sum\nolimits_{i \in {\cal N}} {\left| {{\Xi ^i}} \right|}$.
	Further, the number of association hypotheses of the $\delta$-GLMB density increases exponentially 
	during local filtering
	if no additional operation (i.e. pruning of hypotheses) is carried out.
	As a result,
	modeling the multitarget state as $\delta$-GLMB density
	for DMT requires a huge amount of memory as well as computational resources,
	thus being practically infeasible.
	In this regard, for muitarget tracking it is by far preferable to adopt M$\delta$-GLMB and LMB filters. 
\end{remark}

\begin{remark}  \label{rem:WSN}
	Note that steps $2 -4$ of Algorithm \ref{alg:DMT} are designed 
	for  MASs having a fusion center, 
	which is able to exchange information with all the agents.
	However,
	this is not always the situation since in some MASs (e.g. WSNs)
	the agents work in a peer-to-peer (P2P) manner,
	wherein each individual agent is unable to gather densities from all other agents.
	In such cases, a promising strategy is the consensus method \cite{xiao2005scheme},
	which consists of $L$ iterations of data-exchange with the neighbors and consequent fusion of the received densities with the local one to be performed at each sampling interval.
Details on the application of consensus to DMT can be found in \cite{battistelli2013consensus,fantacci2015consensus}.
\end{remark}

\section{Performance evaluation}
In this section, simulations concerning DMT over a WSN \cite{mukhopadhyay2010advances} are carried out
in order to assess the performance of MIL fusion.
Specifically,
two scenarios are considered, where the first one assumes that all the sensor nodes have the same FoV
 while the second one assumes that the sensing range of each node is limited.
Before illustrating the details of simulations,
the following statements are in order.
\begin{itemize}
	\item[-] In both scenarios,
	the MIL rule is combined with M$\delta$-GLMB and LMB densities,
	hence the local trackers of \cite{vo2014labeled} and \cite{reuter2014labeled} are respectively adopted.
	The $\delta$-GLMB density is not considered in the simulations
	since it requires a huge amount of computational and memory resources as noted in Remark \ref{rem:GLMB},
	and is therefore unsuitable for WSN applications.
	\item[-] Since the sensor nodes of a WSN are often powered by batteries,
	their computational ability,
	memory resources and communication bandwidth are limited.
	Consequently,
	all the involved multi-object densities in this section are represented by GMs.
	\item[-] As observed in Remark \ref{rem:WSN},
	the sensor nodes of a WSN work in a P2P fashion;
	hence consensus is employed in the simulations.
	In particular, we use the algorithm in \cite{fantacci2015consensus}
	but replace the ``GM-M$\delta$-GLMB Fusion" step of Table II with the results of Proposition \ref{pro:P1}
	if the multitarget state is modeled as M$\delta$-GLMB;
	or  the ``GM-LMB Fusion" step of Table II with the results of Proposition \ref{pro:P2}
	if the multitarget state is modeled as LMB.
\end{itemize}

In both scenarios,
the single target state at time $t$ is denoted as
${x_t} = [{{\xi _t}\;{{\dot \xi }_t}\;{\eta _t}\;{{\dot \eta }_t}}]^\top $,
where $[ {{\xi _t}\;{\zeta _t}} ]^\top$ and $[ {{{\dot \xi }_t}\;{{\dot \zeta }_t}} ]^\top$  are respectively position and velocity in 
Cartesian coordinates.
It is supposed that the target motion is described by the following linear white noise acceleration model
\begin{align}
x_t = A \, x_{t-1} + w_t,
\end{align}
where $w_t$ represents additive white Gaussian noise with covariance matrix $Q = {\rm diag}(16[m^2],1[m^2/s^2],16[m^2],1[m^2/s^2])$, 
and
\begin{align}
A = \left[ {\begin{array}{*{20}{c}}
	1&T&0&0\\
	0&1&0&0\\
	0&0&1&T\\
	0&0&0&1
	\end{array}} \right],
\end{align}
$T=1 [s]$ being the sampling interval.
Further, it is assumed that
each node of the WSN is able to provide both range-of-arrival (ROA) and direction-of-arrival (DOA) measurements of targets,
i.e. the measurement $z_t^i$ generated by  a target with state $x_t$, at time $t$ and in node $i\in {\cal N}$, is modeled as
\begin{align} \label{eq:MF2}
{z_t^i} = h^i\left( {{x_t}} \right) + v_t^i,
\end{align}
where $v_t^i$ is a measurement noise modeled as a zero mean Gaussian process with covariance matrix $R^i = {\rm diag}(400[m^2],\;0.64[^{o^2}])$
and 
\begin{align}
h^i\left( {{x_t}} \right) = \left[ \begin{array}{l}
\sqrt {{{\left( {{\xi _t} - {\xi ^i}} \right)}^2} + {{\left( {{\eta _t} - {\eta ^i}} \right)}^2}} \\
{\rm atan2}\left( {{\eta _t} - {\eta ^i},{\xi _t} - {\xi ^i}} \right)
\end{array} \right],
\end{align}
${\rm atan2}$ denoting the four quadrant inverse tangent.
Clutter at each sensor node has Poisson-distributed cardinality 
(expected number of targets $\lambda_c=8$ at each time) 
and uniform spatial distribution over its local FoV.

The common parameters of local tracks are set as follows:
the probability of target survival is set to $P_s=0.95$
for all sensor nodes;
the \textit{Jensen-Shannon divergence} (JSD) has been chosen as discrepancy measure
for label matching among local densities,
with matching threshold $T_D = 50$.
New-born targets are modeled as LMB,
where the number of BCs is taken equal to the number of measurements.
The EP of each BC is set to $0.01$
and the PDF is taken Gaussian,
where the position components of the mean vector are obtained by remapping measurements back to target state space
and the velocitiy components are set to zero;
the covariance matrix is set to ${\rm diag}(1600[m^2],400[m^2/s^2],1600[m^2],400[m^2/s^2])$.
The pruning and merging thresholds for GMs are set respectively to $10^{-5}$ and $10$.
For target extraction, 
when targets are modeled as M$\delta$-GLMB,
the M$\delta$-GLMB density is first converted to LMB by matching the PHD
and then the tracks with EPs larger than $0.55$ are extracted.
At last,
whenever local filtering and fusion are accomplished,
for M$\delta$-GLMB densities,
label set hypotheses with JEP smaller than $10^{-20}$
and tracks of LMBs with EP  smaller than $10^{-5}$ are discarded.

Two performance indicators will be examined in this section: 
the \textit{optimal subpattern assignment} (OSPA) distance \cite{schuhmacher2008consistent} 
(with order $p=2$ and cutoff $c=50\left[m\right]$)
and the cardinality estimation error.

\subsection{Example 1: DMT with nodes having the same FoV}
Let us first consider a simulation scenario wherein $5$ targets subsequently enter
and then move inside a $5000 \times 5000 \, [m^2]$ surveillance region.
The considered WSN consists of $\left| {\cal N} \right| =10$ sensor nodes deployed at known locations
$ [ {\xi ^i}\;{\eta ^i} ]^\top$ for each $i \in \mathcal{N}$.
The considered scenario is illustrated in Fig. \ref{fig:SCEUniFoV}.

\begin{figure}[tb]
	\centering {
		\begin{tabular}{l}
			\includegraphics[width=0.48\textwidth]{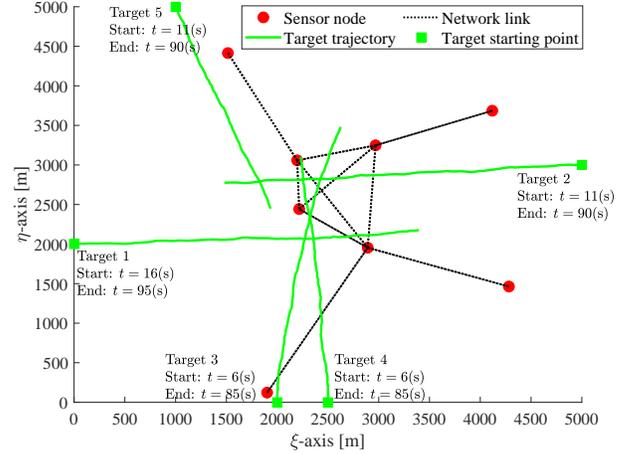}\\
		\end{tabular}
	}
	\caption{Simulated DMT scenario with sensor nodes having the same FoV.}
	\vspace{-0.5\baselineskip}
	\label{fig:SCEUniFoV}
\end{figure}

Now we examine the performance of MIL fusion based on two different probabilities of detection: 
1) $P_d = P_{d,t}^i = 0.98$ and 2) $P_d = P_{d,t}^i = 0.5$
for any time $t$ and sensor node $i\in{\cal N}$.
The number of consensus steps adopted at each node is set to $L=1$.
In order to better illustrate the performance of MIL fusion,
the performance of local trackers without fusion
and of local trackers combined with GCI fusion 
are also considered for comparison.

The average performance over $200$ Monte Carlo trials under different detection probabilities ($P_d=0.98$ and $P_d=0.5$)
are illustrated in Figs. \ref{fig:OSPAUFoV} and \ref{fig:TNUFoV} respectively.
It can be seen that MIL and GCI fusions provide similar results when the detection probability is high.
Conversely, under low detection probability, MIL fusion outperforms GCI fusion
especially for target number estimation.
%Further, it can also be noticed that,
%in the case of low detection probability,
%the performance of MWIG fusion deteriorates whenever the number of consensus steps is increased,
%and  that in this case MWIG fusion performs even worse than no fusion, i.e. local CPHD filtering.
%This is due to the multiplicative nature of the MWIG fusion rule by which
%any missed target detection in a local CPHD filter of a sensor node will cause target disappearance in the fused IIDCP density.
%Consequently, when the detection probability is low and there are more nodes involved in the fusion,
%the probability of occurrence of a missed detection will raise,
%thus negatively affecting DMT performance.
Moreover, it is also observed that
among MIL fusion based algorithms,
the M$\delta$-GLMB based DMT provides better tracking performance 
compared to LMB.
This fact can be seen more clearly in Fig. \ref{fig:OSPAvsPd},
where the average OSPA is reported for different probabilities of detection.
It can also be seen that,
for the M$\delta$-GLMB model,
GCI fusion negatively affects DMT performance 
when $P_d$ decreases below $0.7$ and, similarly,
occurs for the LMB model,
when $P_d$ falls below $0.8$. 

\begin{figure}[tb]
	\centering {
		\begin{tabular}{l}
			\includegraphics[width=0.48\textwidth]{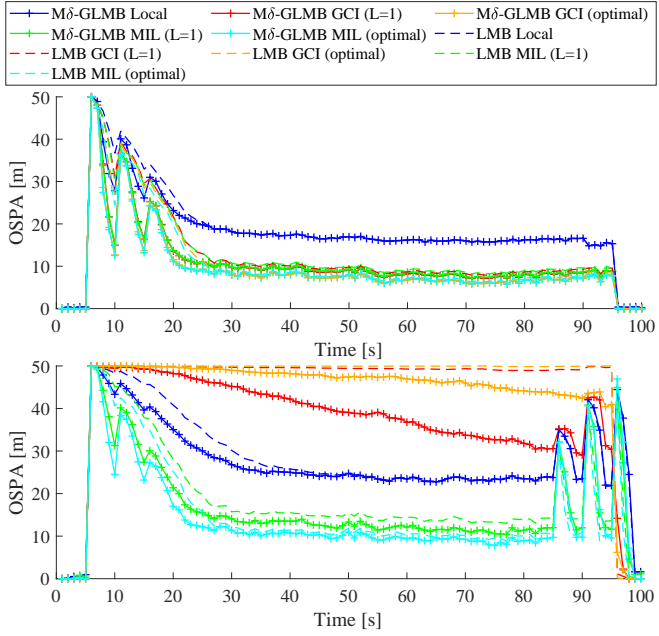}\\
		\end{tabular}
	}
	\caption{OSPA with different detection probabilities, where the top subfigure refers to $P_d=0.98$ and the bottom one to $P_d=0.7$.}
	\vspace{-0.5\baselineskip}
	\label{fig:OSPAUFoV}
\end{figure}

\begin{figure}[tb]
	\centering {
		\begin{tabular}{l}
			\includegraphics[width=0.48\textwidth]{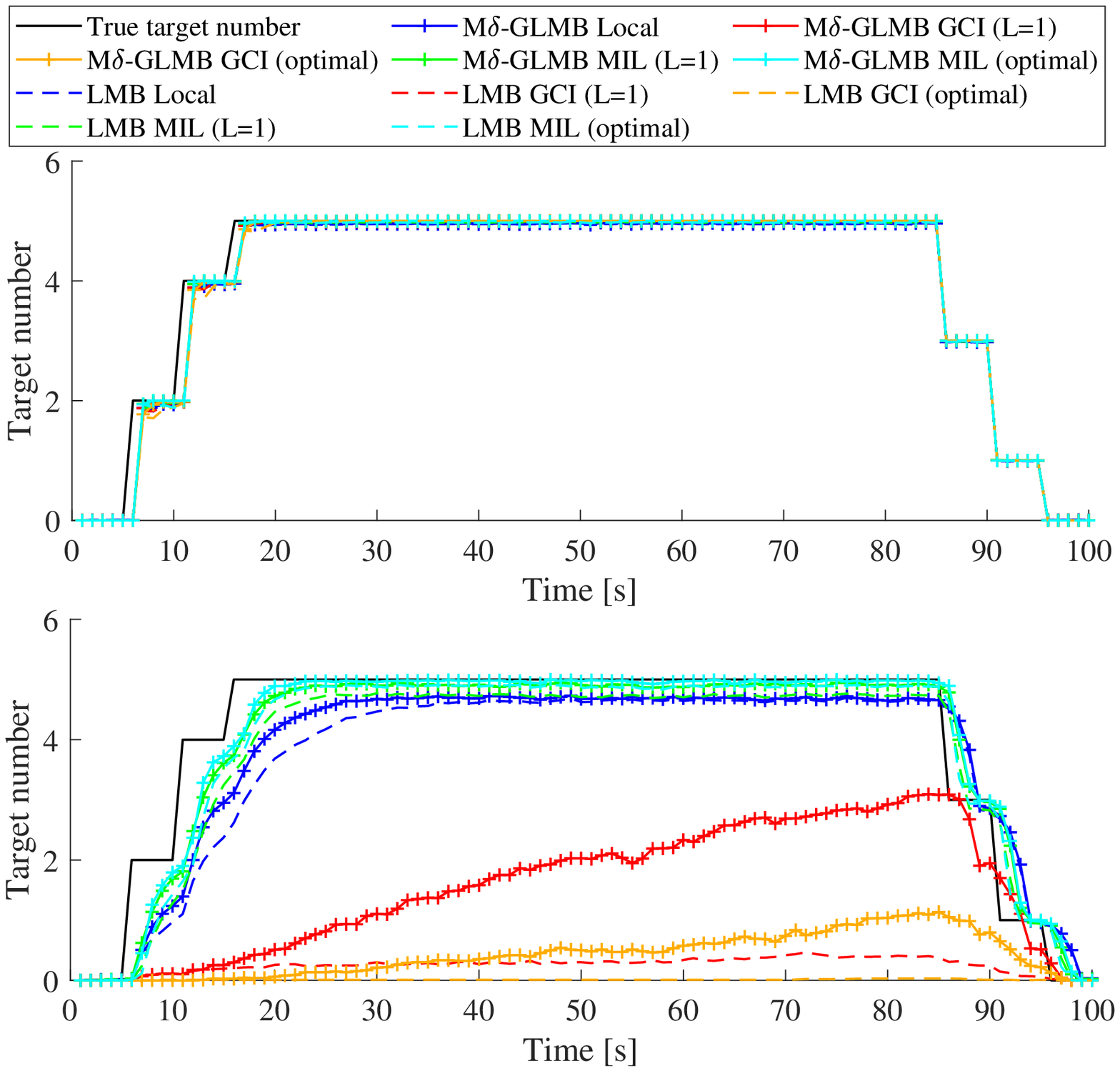}\\
		\end{tabular}
	}
	\caption{Target number estimation with different detection probabilities, where the top subfigure refers to $P_d=0.98$ and the bottom one to $P_d=0.5$.}
	\vspace{-0.5\baselineskip}
	\label{fig:TNUFoV}
\end{figure}

\begin{figure}[tb]
	\centering {
		\begin{tabular}{l}
			\includegraphics[width=0.48\textwidth]{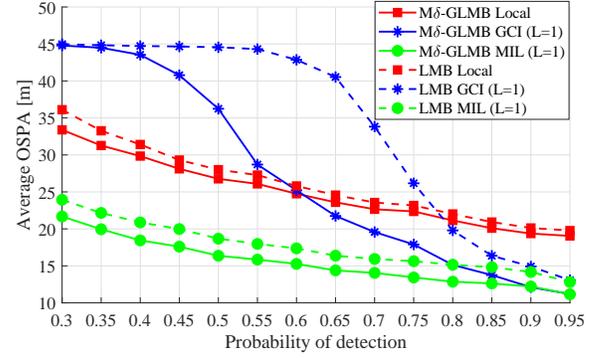}\\
		\end{tabular}
	}
	\caption{Average OSPA under different detection probabilities.}
	\vspace{-0.5\baselineskip}
	\label{fig:OSPAvsPd}
\end{figure}

\subsection{Example 2: DMT with nodes having different FoVs}
Next,
we consider another scenario wherein the trajectories of targets are the same as in Example 1,
while the considered WSN consists of $|{\cal N}| = 4$ nodes.
In this second scenario,
the FoV of each sensor node is taken as a circle  centred at the node location with radius equal to $2500 [m]$.
In order to provide full coverage of the whole surveillance area,
the sensor nodes are regularly placed
as shown in Fig. \ref{fig:SCEDifFoV}. 
As it can be seen,
all targets move to the common FoV of sensor nodes.

\begin{figure}[tb]
	\centering {
		\begin{tabular}{l}
			\includegraphics[width=0.48\textwidth]{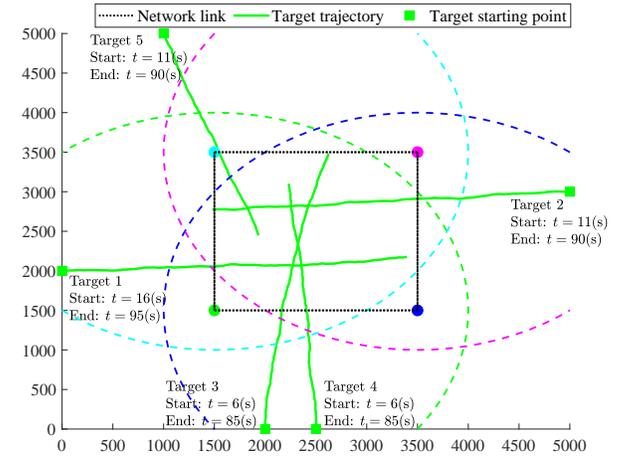}\\
		\end{tabular}
	}
	\caption{Simulated DMT scenario with sensor nodes having different FoVs,
		where the circles with different colors represent different sensor nodes
		and the dashed lines of the same colors delimit the corresponding  FoVs.}
	\vspace{-0.5\baselineskip}
	\label{fig:SCEDifFoV}
\end{figure}

Similar to Example 1,
we also consider both cases of high ($P_d=0.98$) and low ($P_d=0.7$) detection probability
within the FoV of each sensor.
Notice that for each sensor node, we set $P_d=0$ for targets outside the node FoV.
Also in these simulations, the number of consensus steps is set to $L=1$.
The average performance over $200$ Monte Carlo trials under different detection probabilities ($P_d=0.98$ and $P_d=0.7$)
are illustrated in Figs. \ref{fig:OSPADFoV} and \ref{fig:TNDFoV} respectively.
It can be seen that MIL fusion 
is able to detect targets even when they are are in the exclusive FoVs of sensor nodes,
while GCI fusion detects targets only when targets are inside the common FoV of sensor nodes.
Further,
when targets move to the common FoV of sensor nodes,
the same conclusions of Example 1 can be drawn.

\begin{figure}[tb]
	\centering {
		\begin{tabular}{l}
			\includegraphics[width=0.48\textwidth]{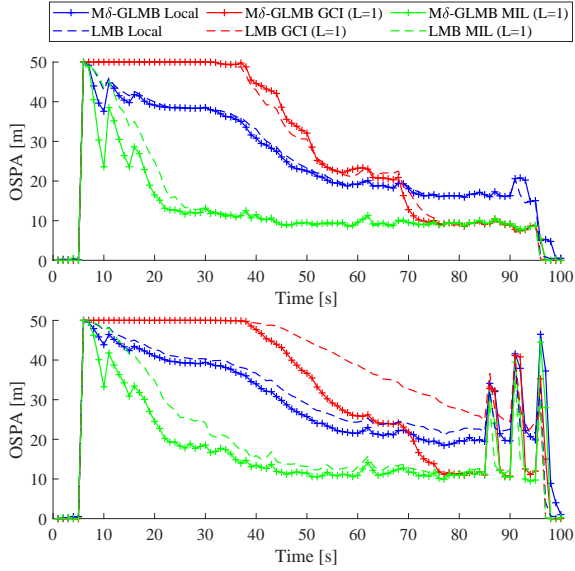}\\
		\end{tabular}
	}
	\caption{OSPA with different detection probabilities, where the top subfigure refers to $P_d=0.98$ and the bottom one to $P_d=0.7$.}
	\vspace{-0.5\baselineskip}
	\label{fig:OSPADFoV}
\end{figure}

\begin{figure}[tb]
	\centering {
		\begin{tabular}{l}
			\includegraphics[width=0.48\textwidth]{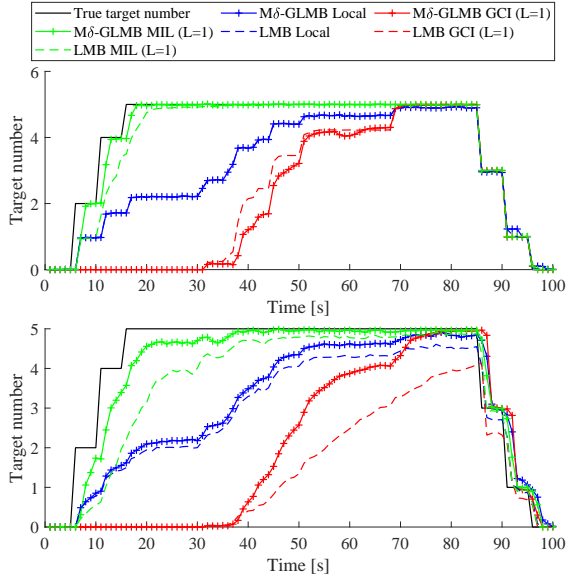}\\
		\end{tabular}
	}
	\caption{Target number estimation with different detection probabilities, where the top subfigure refers to $P_d=0.98$ and the bottom one to $P_d=0.7$.}
	\vspace{-0.5\baselineskip}
	\label{fig:TNDFoV}
\end{figure}

\section{Conclusions}
In this paper,
a new rule which leads to \textit{minimum} (weighted) \textit{information loss} (MIL) is proposed
to handle the problem of fusing \textit{labeled random finite set} (LRFS) densities.
An important property of the proposed fusion rule is that,
combined with the proposed decomposition strategy of LRFS densities,
it can handle the practically relevant case in which local densities are defined in different \textit{fields-of-view} (FoVs).
Further, a strategy is proposed to solve the \textit{label mismatching} (LM) problem among LRFS densities, thus 
strengthening the applicability of the proposed algorithms to real problems.
The performance of the proposed algorithms is assessed by simulation experiments
relative to \textit{distributed multitarget tracking} (DMT)
over a \textit{wireless sensor network} (WSN).
%where sensor nodes have both the same FoV and different FoVs.

\appendices
\section{}  \label{app:P0}
\begin{proof}[Proof of Proposition \ref{pro:P0}]
	From (\ref{eq:GP}), we have
	\begin{align}
	\overline {\boldsymbol \pi} \left( {\bf{X}} \right) &= \sum\limits_{i \in {\cal N}} {{\omega ^i}{{\boldsymbol \pi} ^i}\left( {\bf{X}} \right)}   \nonumber  \\
	& = \sum\limits_{i \in {\cal N}} {{\omega ^i}{p^i}\left( L \right){f^i}\left( {\left. X \right|L} \right)}   \nonumber  \\
	& = \left[ {\sum\limits_{i \in {\cal N}} {{\omega ^i}{p^i}\left( L \right)} } \right] \cdot \left[ {\sum\limits_{i \in {\cal N}} {\frac{{{\omega ^i}{p^i}\left( L \right)}}{{\sum\limits_{j \in {\cal N}} {{\omega ^j}{p^j}\left( L \right)} }}{f^i}\left( {\left. X \right|L} \right)} } \right].
	\end{align}
	Then, the conclusion of Proposition \ref{pro:P0} can be directly obtained.
\end{proof}

\section{}  \label{app:P1}
\begin{proof}[Proof of Proposition \ref{pro:P1}]
	First, it is recalled from (\ref{eq:LRFSN}) that 
	an M$\delta$-GLMB density ${\boldsymbol \pi}_M$ is completely characterized by its JEP $p_M$ and CJPDF $f_M$. 
	Since the aim is to find the optimal M$\delta$-GLMB density according to the MIL criterion, 
	it is straightforward to impose a constraint in the optimization problem of (\ref{eq:MWIL}) as follows
	\begin{align}
	\overline {\boldsymbol \pi}  &= \arg \mathop {\min }\limits_{\boldsymbol \pi}  \sum\limits_{i \in {\cal N}} {{\omega ^i}{D_{\rm KL}}\left( {\left. {{{\boldsymbol \pi}_M ^i}} \right\|{\boldsymbol \pi} } \right)},  \nonumber \\
	& s.t. \quad {\boldsymbol \pi} \left( {\bf X}_n \right) = p\left( L_n \right) \cdot \prod\limits_{k = 1}^n {f_{{l_k}|L_n}\left( {x_k} \right)} ,   \label{eq:LCIMWIL}
	\end{align}
	which amounts to directly looking for the JEP $p_M$ and
	CJPDF $f_{{l_k}|L}$ characterizing the M$\delta$-GLMB density \textcolor{blue}{$\boldsymbol \pi_M$}.
	By resorting to the definition of KLD (\ref{eq:KLD}) and the permutation invariant property of CJPDF (\ref{eq:PI}), we have (\ref{eq:KLDLCILRFS}).
		\begin{figure*}[!t]
		% ensure that we have normalsize text
		\normalsize
		\begin{align}
		{D_{\rm KL}}&\left( {\left. {{{\boldsymbol \pi}_M ^i}} \right\|{\boldsymbol \pi}_M } \right) \nonumber \\
		&= \int {{{\boldsymbol \pi}_M ^i}\left( {\bf{X}} \right)\log \frac{{{{\boldsymbol \pi}_M ^i}\left( {\bf{X}} \right)}}{{{\boldsymbol \pi}_M \left( {\bf{X}} \right)}}\delta {\bf{X}}}    \nonumber  \\
		&  = \sum\limits_{n = 0}^\infty  {\frac{1}{{n!}}\sum\limits_{L_n \in {\cal F}_n(\mathbb L)} {\int {{p^i_M}\left( L_n \right)\prod\limits_{k = 1}^n {f_{{l_k}|L_n}^i\left( {{x_k}} \right)} \log \frac{{{p^i_M}\left( L_n \right)\prod\limits_{k= 1}^n {f_{{l_k}|L_n}^i\left( {{x_k}} \right)} }}{{p_M\left( L_n \right)\prod\limits_{k = 1}^n {{f_{{l_k}|L_n}}\left( {{x_k}} \right)} }}d{x_1}, \ldots ,d{x_n}} } }    \nonumber  \\
		& = \sum\limits_{n = 0}^\infty  {\frac{1}{{n!}}\sum\limits_{L_n \in {\cal F}_n(\mathbb L) } {\int {{p^i_M}\left( L_n \right)\prod\limits_{k = 1}^n {f_{{l_k}|L_n}^i\left( {{x_k}} \right)} \left[ {\log \frac{{{p^i_M}\left( L_n \right)}}{{p_M\left( L_n \right)}} + \sum\limits_{k = 1}^n {\log \frac{{f_{{l_k}|L_n}^i\left( {{x_k}} \right)}}{{{f_{{l_k}|L_n}}\left( {{x_k}} \right)}}} } \right]d{x_1}, \ldots ,d{x_n}} } }   \nonumber  \\
		& = \sum\limits_{n = 0}^\infty  {\frac{1}{{n!}}\sum\limits_{L_n \in {\cal F}_n(\mathbb L) } {{p_M^i}\left( L \right)\log \frac{{{p_M^i}\left( L_n \right)}}{{p_M\left( L_n \right)}}} }  + \sum\limits_{n = 0}^\infty  {\frac{1}{{n!}}\sum\limits_{L_n \subseteq {\mathbb L} } {{p_M^i}\left( L_n \right)\int {\prod\limits_{k = 1}^n {f_{{l_k}|L_n}^i\left( {{x_k}} \right)} \sum\limits_{k = 1}^n {\log \frac{{f_{{l_k}|L_n}^i\left( {{x_k}} \right)}}{{{f_{{l_k}|L_n}}\left( {{x_k}} \right)}}} d{x_1}, \ldots ,d{x_n}} } }    \nonumber  \\
%		&  = \sum\limits_{L \subseteq {\mathbb L} } {{p_M^i}\left( L \right)\log \frac{{{p_M^i}\left( L \right)}}{{p_M\left( L \right)}}}   + \sum\limits_{L \subseteq {\mathbb L}} {{p_M^i}\left( L \right)\sum\limits_{l \in L} {{D_{{\rm{KL}}}}\left( {\left. {f_{l|L}^i} \right\|{f_{l|L}}} \right)} }   \nonumber  \\
		& = {D_{\rm KL}}\left( {\left. {{p_M^i}} \right\|p_M} \right) + \sum\limits_{L  \subseteq {\mathbb L}} {{p^i_M}\left( L \right)\sum\limits_{l \in L} {{D_{\rm KL}}\left( {\left. {f_{l|L}^i} \right\|{f_{l|L}}} \right)} }  \label{eq:KLDLCILRFS}.
		\end{align}
		% IEEE uses as a separator
		\hrulefill
		% The spacer can be tweaked to stop underfull vboxes.
		\vspace*{4pt}
	\end{figure*}

	Then, substituting (\ref{eq:KLDLCILRFS}) into (\ref{eq:LCIMWIL}), we obtain
	\begin{align}
	\overline {\boldsymbol \pi} \hspace{-0.05cm}  &= \arg \mathop {\min }\limits_{{\boldsymbol \pi}_M} \sum\limits_{i \in {\cal N}} {{\omega ^i}{D_{\rm KL}}\left( {\left. {{{\boldsymbol \pi}_M ^i}} \right\|{\boldsymbol \pi}_M } \right)} \nonumber \\
	&= \arg \mathop {\min }\limits_{p_M} \hspace{-0.1cm} \sum\limits_{i \in {\cal N}} {{\omega ^i}{D_{\rm KL}}\left( {\left. {{p^i_M}} \right\|p_M} \right)} \hspace{-0.1cm} + \hspace{-0.1cm} \sum\limits_{L \in  {\mathbb L}} {\sum\limits_{i \in {\cal N}} \hspace{-0.1cm} {\left\{ \hspace{-0.15cm} {\left[ \hspace{-0.05cm} {\sum\limits_{j \in {\cal N}} {{\omega ^j}{p^j_M}\left( L \right)} } \hspace{-0.1cm} \right]} \right.} }  \nonumber  \\
	& \quad  \times \left. {\arg \mathop {\min }\limits_{{f_{l|L}}} \left[ {\frac{{{\omega ^i}{p^i_M}\left( L \right)}}{{\sum\nolimits_{j \in {\cal N}} {{\omega ^j}{p^j_M}\left( L \right)} }}\cdot\sum\limits_{l \in L} {{D_{{\rm{KL}}}}\left( {\left. {f_{l|L}^i} \right\|{f_{l|L}}} \right)} } \right]} \hspace{-0.1cm} \right\} \hspace{-0.1cm} .
	\end{align}
	Finally, applying (\ref{eq:GP}),
	(\ref{eq:Fp}) -- (\ref{eq:Ffl}) can be directly obtained.
\end{proof}

\section{} \label{app:P2}
\begin{proof}[Proof of Proposition \ref{pro:P2}]
	Similarly to (\ref{eq:LCIMWIL}), the fusion problem with respect to multiple LMB densities can be recast into the following optimization problem
	\begin{align}
	\overline {\boldsymbol \pi}  &= \arg \mathop {\min }\limits_{\boldsymbol \pi}  \sum\limits_{i \in {\cal N}} {{\omega ^i}{D_{\rm KL}}\left( {\left. {{\boldsymbol \pi} _\beta ^i} \right\|{\boldsymbol \pi} } \right)} , \nonumber  \\
	& \quad s.t. \; {\boldsymbol \pi} \left( {\bf{X}} \right) = {p_\beta }\left( L \right){f_\beta }\left( {\left. X \right|L} \right),   \label{eq:MWILLMB}
	\end{align}
	where $p_\beta$ and $f_\beta$ are given by (\ref{eq:LMBwl}) and (\ref{eq:LMBspdf}) respectively.
	Specifying the M$\delta$-GLMB densities as LMB densities,  (\ref{eq:KLDLCILRFS}) can be further detailed as
	\begin{align}
	{D_{\rm KL}}\left( {\left. {p_\beta ^i} \right\|p_\beta} \right) &= \sum\limits_{L \subseteq {\mathbb L} } {p_\beta ^i\left( L \right)\log \frac{{\prod\nolimits_{l \in L} {r_l^i}  \cdot \prod\nolimits_{l' \in {\mathbb L}\backslash L} {\left( {1 - r_l^i} \right)} }}{{\prod\nolimits_{l \in L} {{r_l}}  \cdot \prod\nolimits_{l' \in {\mathbb L}\backslash L} {\left( {1 - {r_l}} \right)} }}}   \nonumber  \\
	& = \sum\limits_{L \subseteq {\mathbb L}} {p_\beta ^i\left( L \right)\left[ {\sum\limits_{l \in L} {\log \frac{{r_l^i}}{{{r_l}}}}  + \sum\limits_{l' \in {\mathbb L}\backslash L} {\log \frac{{1 - r_l^i}}{{1 - {r_l}}}} } \right]}   \nonumber  \\
	& = \sum\limits_{l \in {\mathbb L}} \left\{ {\left[ {\sum\limits_{L \subseteq {\mathbb L}\backslash \left\{ l \right\}} {p_\beta ^i\left( {L\bigcup {\left\{ l \right\}} } \right)} } \right]\log \frac{{r_l^i}}{{{r_l}}}}  \right\} \nonumber \\
	& \quad + \sum\limits_{l \in {\mathbb L}} \left\{ {\left[ {\sum\limits_{L \subseteq {\mathbb L}\backslash \left\{ l \right\}} {p_\beta ^i\left( L \right)} } \right]\log \frac{{1 - r_l^i}}{{1 - {r_l}}}} \right\}   \nonumber  \\
	& = \sum\limits_{l \in {\mathbb L}} {\left\{ {\left[ {r_l^i\log \frac{{r_l^i}}{{{r_l}}} + \left( {1 - r_l^i} \right)\log \frac{{\left( {1 - r_l^i} \right)}}{{\left( {1 - {r_l}} \right)}}} \right]} \right\}}   \nonumber  \\
	& = \sum\limits_{l \in {\mathbb L}} {{D_{\rm KL}}\left( {\left. {\rho _l^i} \right\|{\rho _l}} \right)} ,
	\end{align}
	where $\rho_l$ denotes the Bernoulli density with parameter equal to the EP of track $l$, and
	\begin{align}
	\sum\limits_{L \subseteq {\mathbb L} } {{p^i}\left( L \right)\sum\limits_{l \in L} {{D_{\rm KL}}\left( {\left. {f_l^i} \right\|{f_l}} \right)} } = \sum\limits_{l \in {\mathbb L}} {{r^i_l} \cdot {D_{\rm KL}}\left( {\left. {f_l^i} \right\|{f_l}} \right)}.
	\end{align}
	Hence (\ref{eq:MWILLMB}) is re-written as
	\begin{align}
	\overline {\boldsymbol \pi}   &= \sum\limits_{l \in {\mathbb L}} {\arg \mathop {\min }\limits_{{\rho _l}} \sum\limits_{i \in {\cal N}} {{\omega ^i}{D_{{\rm{KL}}}}\left( {\left. {\rho _l^i} \right\|{\rho _l}} \right)} }  + \sum\limits_{l \in {\mathbb L}} {\left[ {\left( {\sum\nolimits_{j \in {\cal N}} {{\omega ^j}r_l^j} } \right)} \right.}   \nonumber  \\
	& \quad \left. { \times \arg \mathop {\min }\limits_{{f_l}} \sum\limits_{i \in {\cal N}} {\frac{{{\omega ^i}r_l^i}}{{\sum\nolimits_{j \in {\cal N}} {{\omega ^j}r_l^j} }}{D_{{\rm{KL}}}}\left( {\left. {f_l^i} \right\|{f_l}} \right)} } \right].
	\end{align}
	Finally, (\ref{eq:FLMBinten}) -- (\ref{eq:FLMBspdf}) can be readily obtained
	by directly applying  (\ref{eq:GP}).
\end{proof}

\section{} \label{app:rem3}
If Proposition \ref{pro:P1} is adopted to fuse LMB densities,
the resulting global density becomes M$\delta$-GLMB 
with JEP $p$ given by (\ref{eq:JEPP2}) and CJPDF $f$ given by
\begin{align}
\overline f\left( {\left. {{X_n}} \right|{L_n}} \right) = \prod\limits_{k = 1}^n {{\overline f_{{l_k}}}\left( {{x_k}} \right)} ,
\end{align}
where
\begin{align}
{\overline f_{{l_k}}}\left( {{x_k}} \right) = \sum\limits_{i \in {\cal N}} {\frac{{{\omega ^i}{p^i}\left( {{L_n}} \right)}}{{\sum\nolimits_{j \in {\cal N}} {{\omega ^j}{p^j}\left( {{L_n}} \right)} }}f_{{l_k}}^i\left( {{x_k}} \right)}.
\end{align}
Following \cite[Section III-B]{reuter2014labeled},
after converting it to LMB by matching the PHD,
the EP $\overline r_l$ of track $l \in {\mathbb L}$ can be computed as
\begin{align}
{\overline r_l} &= \sum\limits_{L \subseteq \mathbb L} {p\left( L \right){{\bf{1}}_L}\left( l \right)}   \nonumber \\
%&= \sum\limits_{L \subseteq \mathbb L\backslash \left\{ l \right\}} {\sum\limits_{i \in {\cal N}} {{\omega ^i}r_l^i\left[ {\prod\limits_{l' \in \mathbb L\backslash \left\{ {L \cup \left\{ l \right\}} \right\}} {\left( {1 - r_{l'}^i} \right)} \prod\limits_{l'' \in L} {r_{l''}^i} } \right]} }   \nonumber  \\
& = \sum\limits_{i \in {\cal N}} {{\omega ^i}r_l^i\varpi \left( {\mathbb L\backslash \left\{ l \right\}} \right)},
\end{align}
where
\begin{align}
\varpi \left(\mathbb L \right) = \sum\limits_{L \subseteq \mathbb L} {\left[ {\prod\limits_{l \in \mathbb L\backslash L} {\left( {1 - r_l^i} \right)} \prod\limits_{l' \in L} {r_{l'}^i} } \right]} 
\end{align}
amounts to computing the summation of JEPs of the LMB density defined over all possible label subsets of $\mathbb L$, 
thus always equals one, i.e. $\varpi \left( \mathbb L \right) \equiv 1$.
Then, it is straightforward to see that ${\overline r_l} = \sum\nolimits_{i \in {\cal N}} {{\omega ^i}r_l^i}$.
Further, the fused PDF $\overline f_l$ of track $l \in {\mathbb L}$ is computed as
\begin{align}
{\overline f_l}\left( x \right) &= \frac{{\sum\limits_{L \subseteq \mathbb L} {\overline p\left( L \right) \cdot {\overline f_{\left. l \right|L}}\left( x \right){{\bf{1}}_L}\left( l \right)} }}{{\sum\limits_{L \subseteq \mathbb L} {\overline p\left( L \right){{\bf{1}}_L}\left( l \right)} }}  \nonumber  \\
& = \frac{{\sum\limits_{L \subseteq \mathbb L} {\sum\limits_{j \in {\cal N}} {{\omega ^j}{p^j}\left( {L} \right)} \sum\limits_{i \in {\cal N}} {\frac{{{\omega ^i}{p^i}\left( L \right)}}{{\sum\limits_{j \in {\cal N}} {{\omega ^j}{p^j}\left( L \right)} }}f_l^i\left( x \right)} } {{\bf{1}}_L}\left( l \right)}}{{\sum\limits_{L \subseteq \mathbb L} {\sum\limits_{i \in {\cal N}} {{\omega ^i}{p^i}\left( L \right)} {{\bf{1}}_{L'}}\left( l \right)} }}  \nonumber  \\
& = \frac{{\sum\limits_{i \in {\cal N}} {{\omega ^i}f_l^i\left( x \right)\sum\limits_{L \subseteq \mathbb L} {{p^i}\left( L \right){{\bf{1}}_L}\left( l \right)} } }}{{\sum\limits_{i \in {\cal N}} {{\omega ^i}r_l^i} }}   \nonumber  \\
& = \frac{{\sum\limits_{i \in {\cal N}} {{\omega ^i}r_l^jf_l^i\left( x \right)} }}{{\sum\nolimits_{j \in {\cal N}} {{\omega ^j}r_l^j} }},
\end{align}
where ${\bf{1}}_L$ represents the inclusion function defined as
\begin{align}
{{\bf{1}}_L}\left( l \right) = \left\{ \begin{array}{l}
1,\quad {\rm if}\; l \in L\\
0,\quad {\rm if}\; l \notin L
\end{array} \right. .
\end{align}
Then, it can be seen immediately that the converted LMB density from the fused M$\delta$-GLMB density 
is the one obtained by using the results of Proposition \ref{pro:P2}.

\section{} \label{app:the2}
\textit{Proof of Theorem \ref{the:T2}.}
The KLD from the fused M$\delta$-GLMB density of Proposition \ref{pro:P1} to the MIL-OFD 
cannot be directly computed. 
However, it turns out that it is bounded by \cite[eq. (4)]{do2003fast}
\begin{align}
&{D_{\rm KL}}( {\sum\limits_{i \in {\cal N}} {{\omega ^i}\pi _M^i} ||{{\bar \pi }_M}} ) \nonumber  \\
&= {D_{\rm KL}}( {\sum\limits_{i \in {\cal N}} {{\omega ^i}\pi _M^i} ||\sum\limits_{i \in {\cal N}} {{\omega ^i}{{\bar \pi }_M}} } )   \nonumber  \\
&\le \sum\limits_{i \in {\cal N}} {{\omega ^i}{D_{\rm KL}}\left( {\pi _M^i||{{\bar \pi }_M}} \right)}  \nonumber \\
&= \sum\limits_{i \in {\cal N}} {{\omega ^i}[ {{D_{\rm KL}}\left( {p_M^i||{{\bar p}_M}} \right) \hspace{-0.1cm} + \hspace{-0.1cm}  \sum\limits_{L \subseteq {\mathbb L}} {p_M^i\left( L \right)\sum\limits_{l \in L} {{D_{\rm KL}}( {f_{l|L}^i||{{\bar f}_{l|L}}} )} } } ]}   \nonumber  \\
& = \sum\limits_{i \in {\cal N}} {{\omega ^i}[{D_{\rm KL}}(p_M^i||\sum\limits_{j \in {\cal N}} {{\omega ^j}p_M^j} )} \nonumber  \\
& \quad  + \sum\limits_{L \subseteq {\mathbb L}} {p_M^i\left( L \right)\sum\limits_{l \in L} {{D_{\rm KL}}(f_{l|L}^i||\sum\limits_{j \in {\cal N}} {{\omega ^j}f_{l|L}^j} )} } ]   \nonumber  \\
& \le \sum\limits_{i \in {\cal N}} {{\omega ^i}[\sum\limits_{j \in {\cal N}} {{\omega ^j}{D_{\rm KL}}( {p_M^i||p_M^j} )} } \nonumber  \\
& \quad  + \sum\limits_{j \in {\cal N}} {{\omega ^j}\sum\limits_{L \subseteq {\mathbb L}} {p_M^i\left( L \right)\sum\limits_{l \in L} {{D_{\rm KL}}( {f_{l|L}^i||f_{l|L}^j} )} } } ]  \nonumber  \\
& = \sum\limits_{i \in {\cal N}} {\sum\limits_{j \in {\cal N},j \ne i} {{\omega ^i}{\omega ^j}{D_{\rm KL}}( {\left. {\pi _M^i} \right\|\pi _M^j} )} } ,  \label{eq:PUB}
\end{align}
which proves (\ref{eq:KLDMO}) in Theorem \ref{the:T2}.
Furthermore, the proof of  (\ref{eq:KLDLO}) can be accomplished 
directly following the steps of (\ref{eq:PUB}),
and is therefore omitted. \qed

\section{} \label{app:P3}
\begin{proof}[Proof of Proposition \ref{Pro:FoID}]
	Given an LRFS 
	\[{\bf{X}} = \left\{ {\left( {{x_1},{l_1}} \right), \ldots ,\left( {{x_n},{l_n}} \right)} \right\}\]
	and the disjoint label spaces $\mathbb L_1,\ldots,\mathbb L_M$,
	we denote 
	\begin{align*}
	{{\bf{X}}_m} &= \left\{ {{\bf{X}}':{\cal L}\left( {{\bf{X}}'} \right) = {\cal L}\left( {\bf{X}} \right)\bigcap {{L_m}} } \right\} \\
	& \buildrel \Delta \over = \left\{ {\left( {{x_{1,m}},{l_{1,m}}} \right), \ldots ,\left( {{x_{{n_m},m}},{l_{{n_m},m}}} \right)} \right\}.
	\end{align*}
	
%	The global density ${\boldsymbol \pi}$ leading to MWIL can be computed as
%	\begin{align}
%	{\boldsymbol \pi} = \arg \mathop {\min }\limits_{\overline{\boldsymbol \pi}} \sum\limits_{i \in N} {{\omega ^i}\cdot{D_{{\rm{KL}}}}\left( {\left. {\prod\limits_{m = 1}^M {{\boldsymbol \pi}_m^i} } \right\|\overline{\boldsymbol \pi}} \right)}.   \label{eq:MWILInd}
%	\end{align}
	
	Substituting the definition of KLD into (\ref{eq:MWILInd}) and recalling (\ref{eq:Prop1})-(\ref{eq:Prop2}), we have (\ref{eq:MWILINDKLD}). 
	Then, (\ref{eq:MWILInd}) can be readily obtained by exploiting the results of Proposition \ref{pro:P0}.
	
	\begin{figure*}[!t]
		% ensure that we have normalsize text
		\normalsize
		\begin{align}
		 \overline {\boldsymbol \pi}  &= \arg \mathop {\min }\limits_{\left\{ {{ {\boldsymbol \pi} _m}} \right\}_{m = 1}^M} \sum\limits_{i \in {\cal N}} {{\omega ^i}{D_{\rm KL}}\left( {\left. {\prod\limits_{m = 1}^M {{\boldsymbol \pi} _m^i} } \right\|\prod\limits_{m = 1}^M {{ {\boldsymbol \pi} _m}} } \right)}    \nonumber  \\
		& = \arg \mathop {\min }\limits_{\left\{ {{ {\boldsymbol \pi} _m}} \right\}_{m = 1}^M} \sum\limits_{i \in {\cal N}} {{\omega ^i} \cdot \sum\limits_{n = 0}^\infty  {\sum\limits_{L \in {\cal F}_n(\mathbb L)} {\int { \ldots \int {\prod\limits_{m = 1}^M {{\boldsymbol \pi} _m^i\left( {{{\bf{X}}_m}} \right)} \sum\limits_{m' = 1}^M {\log \frac{{{\boldsymbol \pi} _{m'}^i\left( {{{\bf{X}}_{m'}}} \right)}}{{{ {\boldsymbol \pi} _{m'}}\left( {{{\bf{X}}_{m'}}} \right)}}} d{x_1}, \ldots ,d{x_n}} } } } }   \nonumber  \\
		& = \arg \mathop {\min }\limits_{\left\{ {{ {\boldsymbol \pi} _m}} \right\}_{m = 1}^M} \sum\limits_{m' = 1}^M {\sum\limits_{i \in {\cal N}} {{\omega ^i} \cdot \sum\limits_{n = 0}^\infty  {\sum\limits_{L \in {\cal F}_n(\mathbb L)} {\int { \cdots \int {\prod\limits_{m = 1}^M {{\boldsymbol \pi} _m^i\left( {{{\bf{X}}_m}} \right)} \log \frac{{{\boldsymbol \pi} _{m'}^i\left( {{{\bf{X}}_{m'}}} \right)}}{{{ {\boldsymbol \pi} _{m'}}\left( {{{\bf{X}}_{m'}}} \right)}}d{x_1}, \ldots ,d{x_n}} } } } } }    \nonumber  \\
		& = \arg \mathop {\min }\limits_{\left\{ {{ {\boldsymbol \pi} _m}} \right\}_{m = 1}^M} \sum\limits_{m' = 1}^M {\sum\limits_{i \in {\cal N}} {{\omega ^i} \cdot \sum\limits_{n = 0}^\infty  {\sum\limits_{L \in {\cal F}_n(\mathbb L)} {\int { \cdots \int {\prod\limits_{m = 1,m \ne m'}^M {{\boldsymbol \pi} _m^i\left( {{{\bf{X}}_m}} \right)d{x_{1,m}}, \ldots ,d{x_{{n_m},m}}} } } } } } }  \nonumber  \\
		& \quad\quad\quad\quad\quad\quad\quad  \times \int { \cdots \int {{\boldsymbol \pi} _{m'}^i\left( {{{\bf{X}}_{m'}}} \right)\log \frac{{{\boldsymbol \pi} _{m'}^i\left( {{{\bf{X}}_{m'}}} \right)}}{{{ {\boldsymbol \pi} _{m'}}\left( {{{\bf{X}}_{m'}}} \right)}}d{x_{1,m'}}, \ldots ,d{x_{{n_{m'}},m'}}} }   \nonumber  \\
		& = \sum\limits_{m = 1}^M {\arg \mathop {\min }\limits_{{{\boldsymbol \pi} _m}} \sum\limits_{i \in {\cal N}} {{\omega ^i} \cdot {D_{\rm KL}}\left( {\left. {{\boldsymbol \pi} _m^i} \right\|{{\boldsymbol \pi} _m}} \right)} }
		\label{eq:MWILINDKLD}.
		\end{align}
		% IEEE uses as a separator
		\hrulefill
		% The spacer can be tweaked to stop underfull vboxes.
		\vspace*{4pt}
	\end{figure*}	
\end{proof}

%\section{} \label{app:P4}
%\begin{proof}[Proof of Theorem \ref{the:2}]
%	The global density ${\boldsymbol \pi}$ leading to MWIG can be computed as
%	\begin{align}
%	\overline{\boldsymbol \pi}'  = \arg \mathop {\min }\limits_{\left\{ {{\overline{\boldsymbol \pi} '_m}} \right\}_{m = 1}^M} \sum\limits_{i \in {\cal N}} {{\omega ^i}\cdot {D_{\rm KL}}\left( {\left. \prod\limits_{m = 1}^M {{\overline{\boldsymbol \pi}' _m}}  \right\| {\prod\limits_{m = 1}^M {{\boldsymbol \pi} _m^i} }  } \right)} .   \label{eq:MWIGInd}
%	\end{align}
%	Notice that the difference between (\ref{eq:MWIGInd}) and (\ref{eq:MWILInd})
%	lies only at the exchange of arguments.
%	Then by implementing a similar derivation procedure to (\ref{eq:MWILINDKLD}),
%	formula (\ref{eq:MWIGInd}) can be re-written as
%	\begin{align}
%	\overline{\boldsymbol \pi}'  = \sum\limits_{m = 1}^M {\arg \mathop {\min }\limits_{{\overline{\boldsymbol \pi}' _m}} \sum\limits_{i \in {\cal N}} {{\omega ^i} \cdot {D_{\rm KL}}\left( {\left. {\overline{\boldsymbol \pi}' _m} \right\| {{\boldsymbol \pi} _m^i} } \right)} }.
%	\end{align}
%	Then the result of Theorem \ref{the:2} can be immediately seen.
%\end{proof}

\section{} \label{app:P5}
\begin{proof}[Proof of Proposition \ref{pro:dec}]
	The purpose is to find $M$ mutually independent sub-densities ${\boldsymbol \pi _m} = \left( {{p_m},{f_m}} \right)$ 
	defined in $M$ disjoint label spaces
	such that their product minimizes the KL divergence.
	By definition, we have (\ref{eq:LRFSdec}),
	where $C$ denotes the constant that is not related to ${\boldsymbol \pi _m}$, $m = 1,\ldots,M$.
	Then. by exploiting the results of Proposition \ref{pro:P0} and minimizing ${D_{\rm KL}}\left( {\left. \boldsymbol \pi  \right|\prod\nolimits_{m = 1}^M {{\boldsymbol \pi _m}} } \right)$,
	the conclusion of Proposition \ref{pro:dec} can be proved.
	\begin{figure*}[!t]
		% ensure that we have normalsize text
		\normalsize
		\begin{align}
		& {D_{\rm KL}}\left( {\left. \boldsymbol \pi  \right|\prod\limits_{m = 1}^M {{\boldsymbol \pi _m}} } \right) \nonumber \\
		&\quad= \sum\limits_{n = 0}^\infty  {\sum\limits_{L \in {\cal F}_n(\mathbb L)} {\int { \cdots \int {\boldsymbol \pi \left( {\bf{X}} \right)\log \frac{{\boldsymbol \pi \left( {\bf{X}} \right)}}{{\prod\limits_{m = 1}^M {{\boldsymbol \pi _m}\left( {{{\bf{X}}_m}} \right)} }}d{x_1}, \ldots ,d{x_n}} } } }    \nonumber  \\
		&\quad = \sum\limits_{n = 0}^\infty  {\sum\limits_{L \in {\cal F}_n(\mathbb L)} {\int { \cdots \int {p\left( L \right)f\left( {\left. X \right|L} \right)\log \frac{{p\left( L \right)f\left( {\left. X \right|L} \right)}}{{\prod\limits_{m = 1}^M {{p_m}\left( {{L_m}} \right){f_m}\left( {\left. {{X_m}} \right|{L_m}} \right)} }}d{x_1}, \ldots ,d{x_n}} } } }     \nonumber  \\
		&\quad= \sum\limits_{n = 0}^\infty  {\sum\limits_{L \in {\cal F}_n(\mathbb L)} {p\left( L \right)\log \frac{{p\left( L \right)}}{{\prod\limits_{m = 1}^M {{p_m}\left( {{L_m}} \right)} }}} }  + \sum\limits_{n = 0}^\infty  {\sum\limits_{L \in {\cal F}_n(\mathbb L)} {p\left( L \right)\int { \cdots \int {f\left( {\left. X \right|L} \right)\log \frac{{f\left( {\left. X \right|L} \right)}}{{\prod\limits_{m = 1}^M {{f_m}\left( {\left. {{X_m}} \right|{L_m}} \right)} }}d{x_1}, \ldots ,d{x_n}} } } }     \nonumber  \\
		&\quad = \sum\limits_{m = 1}^M {\sum\limits_{n = 0}^\infty  {\sum\limits_{L \in {\cal F}_n(\mathbb L)} {p\left( L \right)\log \frac{{p\left( L \right)}}{{{p_m}\left( {{L_m}} \right)}}} } }  + \sum\limits_{m = 1}^M {\sum\limits_{n = 0}^\infty  {\sum\limits_{L \in {\cal F}_n(\mathbb L)} {p\left( L \right)\int { \cdots \int {f\left( {\left. X \right|L} \right)\log \frac{{f\left( {\left. X \right|L} \right)}}{{{f_m}\left( {\left. {{X_m}} \right|{L_m}} \right)}}d{x_1}, \ldots ,d{x_n}} } } } }     \nonumber  \\
		&\quad= C + \sum\limits_{m = 1}^M {\left[ {\sum\limits_{L:L\bigcap {{\mathbb L_m}}  = {L_m}} {p\left( L \right)} } \right]\log \frac{{\sum\limits_{L:L\bigcap {{\mathbb L_m}}  = {L_m}} {p\left( L \right)} }}{{{p_m}\left( {{L_m}} \right)}}}  + \sum\limits_{m = 1}^M {\sum\limits_{L:L\bigcap {{\mathbb L_m}}  = {L_m}} {\tilde \omega \left( L \right){D_{\rm KL}}\left( {\left. {{{\tilde f}_m}} \right\|{f_m}} \right)} } 
		\label{eq:LRFSdec}.
		\end{align}
		% IEEE uses as a separator
		\hrulefill
		% The spacer can be tweaked to stop underfull vboxes.
		\vspace*{4pt}
	\end{figure*}	
\end{proof}

% use section* for acknowledgment
%\section*{Acknowledgment}

%The authors would like to thank...

% Can use something like this to put references on a page
% by themselves when using endfloat and the captionsoff option.
\ifCLASSOPTIONcaptionsoff
  \newpage
\fi

% trigger a \newpage just before the given reference
% number - used to balance the columns on the last page
% adjust value as needed - may need to be readjusted if
% the document is modified later
%\IEEEtriggeratref{8}
% The "triggered" command can be changed if desired:
%\IEEEtriggercmd{\enlargethispage{-5in}}

% references section

% can use a bibliography generated by BibTeX as a .bbl file
% BibTeX documentation can be easily obtained at:
% http://mirror.ctan.org/biblio/bibtex/contrib/doc/
% The IEEEtran BibTeX style support page is at:
% http://www.michaelshell.org/tex/ieeetran/bibtex/
%\bibliographystyle{IEEEtran}
% argument is your BibTeX string definitions and bibliography database(s)
%\bibliography{IEEEabrv,../bib/paper}
%
% <OR> manually copy in the resultant .bbl file
% set second argument of \begin to the number of references
% (used to reserve space for the reference number labels box)

%\bibliographystyle{IEEEtran}
%\bibliography{IEEEabrv,reference}

\begin{thebibliography}{10}
	\providecommand{\url}[1]{#1}
	\csname url@samestyle\endcsname
	\providecommand{\newblock}{\relax}
	\providecommand{\bibinfo}[2]{#2}
	\providecommand{\BIBentrySTDinterwordspacing}{\spaceskip=0pt\relax}
	\providecommand{\BIBentryALTinterwordstretchfactor}{4}
	\providecommand{\BIBentryALTinterwordspacing}{\spaceskip=\fontdimen2\font plus
		\BIBentryALTinterwordstretchfactor\fontdimen3\font minus
		\fontdimen4\font\relax}
	\providecommand{\BIBforeignlanguage}[2]{{%
			\expandafter\ifx\csname l@#1\endcsname\relax
			\typeout{** WARNING: IEEEtran.bst: No hyphenation pattern has been}%
			\typeout{** loaded for the language `#1'. Using the pattern for}%
			\typeout{** the default language instead.}%
			\else
			\language=\csname l@#1\endcsname
			\fi
			#2}}
	\providecommand{\BIBdecl}{\relax}
	\BIBdecl
	
	\bibitem{mahler2000optimal}
	R.~P. Mahler, ``Optimal/robust distributed data fusion: a unified approach,''
	in \emph{Signal Processing, Sensor Fusion, and Target Recognition IX}, vol.
	4052.\hskip 1em plus 0.5em minus 0.4em\relax International Society for Optics
	and Photonics, 2000, pp. 128--139.
	
	\bibitem{bailey2012conservative}
	T.~Bailey, S.~Julier, and G.~Agamennoni, ``On conservative fusion of
	information with unknown non-Gaussian dependence,'' in \emph{2012 15th
		International Conference on Information Fusion}.\hskip 1em plus 0.5em minus
	0.4em\relax IEEE, 2012, pp. 1876--1883.
	
	\bibitem{genest1986combining}
	C.~Genest and J.~V. Zidek, ``Combining probability distributions: A critique
	and an annotated bibliography,'' \emph{Statistical Science}, vol.~1, no.~1,
	pp. 114--135, 1986.
	
	\bibitem{uney2013distributed}
	M.~{\"U}ney, D.~E. Clark, and S.~J. Julier, ``Distributed fusion of PHD filters
	via exponential mixture densities,'' \emph{IEEE Journal of Selected Topics in
		Signal Processing}, vol.~7, no.~3, pp. 521--531, 2013.
	
	\bibitem{battistelli2013consensus}
	G.~Battistelli, L.~Chisci, C.~Fantacci, A.~Farina, and A.~Graziano, ``Consensus
	CPHD filter for distributed multitarget tracking,'' \emph{IEEE Journal of
		Selected Topics in Signal Processing}, vol.~7, no.~3, pp. 508--520, 2013.
	
	\bibitem{wang2017distributed}
	B.~Wang, W.~Yi, R.~Hoseinnezhad, S.~Li, L.~Kong, and X.~Yang, ``Distributed
	fusion with multi-Bernoulli filter based on generalized covariance
	intersection,'' \emph{IEEE Transactions on Signal Processing}, vol.~65,
	no.~1, pp. 242--255, 2017.
	
	\bibitem{fantacci2018robust}
	C.~Fantacci, B.-N. Vo, B.-T. Vo, G.~Battistelli, and L.~Chisci, ``Robust fusion
	for multisensor multiobject tracking,'' \emph{IEEE Signal Processing
		Letters}, vol.~25, no.~5, pp. 640--644, 2018.
	
	\bibitem{li2018robust}
	S.~Li, W.~Yi, R.~Hoseinnezhad, G.~Battistelli, B.~Wang, and L.~Kong, ``Robust
	distributed fusion with labeled random finite sets,'' \emph{IEEE Transactions
		on Signal Processing}, vol.~66, no.~2, pp. 278--293, 2018.
	
	\bibitem{li2019computationally}
	S.~Li, G.~Battistelli, L.~Chisci, W.~Yi, B.~Wang, and L.~Kong,
	``Computationally efficient multi-agent multi-object tracking with labeled
	random finite sets,'' \emph{IEEE Transactions on Signal Processing}, vol.~67,
	no.~1, pp. 260--275, 2019.
	
	\bibitem{battistelli2014kullback}
	G.~Battistelli and L.~Chisci, ``Kullback--Leibler average, consensus on
	probability densities, and distributed state estimation with guaranteed
	stability,'' \emph{Automatica}, vol.~50, no.~3, pp. 707--718, 2014.
	
	\bibitem{gao2019multiobject}
	L.~Gao, G.~Battistelli, and L.~Chisci, ``Multiobject fusion with minimum
	information loss,'' \emph{arXiv:1903.04239}, 2019.
	
	\bibitem{shore1980axiomatic}
	J.~Shore and R.~Johnson, ``Axiomatic derivation of the principle of maximum
	entropy and the principle of minimum cross-entropy,'' \emph{IEEE Transactions
		on information theory}, vol.~26, no.~1, pp. 26--37, 1980.
	
	\bibitem{roy1982minimizing}
	J.~R. Roy, D.~F. Batten, and P.~Lesse, ``Minimizing information loss in simple
	aggregation,'' \emph{Environment and Planning A}, vol.~14, no.~7, pp.
	973--980, 1982.
	
	\bibitem{abbas2009kullback}
	A.~E. Abbas, ``A Kullback-Leibler view of linear and log-linear pools,''
	\emph{Decision Analysis}, vol.~6, no.~1, pp. 25--37, 2009.
	
	\bibitem{blair2000multitarget}
	W.~D. Blair and Y.~Bar-Shalom, \emph{Multitarget-multisensor tracking:
		applications and advances}.\hskip 1em plus 0.5em minus 0.4em\relax Artech
	House, 2000.
	
	\bibitem{li2018partial}
	T.~Li, J.~M. Corchado, and S.~Sun, ``Partial consensus and conservative fusion
	of Gaussian mixtures for distributed PHD fusion,'' \emph{IEEE Transactions on
		Aerospace and Electronic Systems}, 2018.
	
	\bibitem{li2019local}
	T.~Li, V.~Elvira, H.~Fan, and J.~M. Corchado, ``Local-diffusion-based
	distributed SMC-PHD filtering using sensors with limited sensing range,''
	\emph{IEEE Sensors Journal}, vol.~19, no.~4, pp. 1580--1589, 2019.
	
	\bibitem{gostar2017cauchy}
	A.~K. Gostar, R.~Hoseinnezhad, and A.~Bab-Hadiashar, ``Cauchy-Schwarz
	divergence-based distributed fusion with Poisson random finite sets,'' in
	\emph{2017 International Conference on Control, Automation and Information
		Sciences (ICCAIS)}.\hskip 1em plus 0.5em minus 0.4em\relax IEEE, 2017, pp.
	112--116.
	
	\bibitem{hoang2015cauchy}
	H.~G. Hoang, B.-N. Vo, B.-T. Vo, and R.~Mahler, ``The Cauchy--Schwarz
	divergence for Poisson point processes,'' \emph{IEEE Transactions on
		Information Theory}, vol.~61, no.~8, pp. 4475--4485, 2015.
	
	\bibitem{battistelli2015distributed}
	G.~Battistelli, L.~Chisci, C.~Fantacci, A.~Farina, and R.~P. Mahler,
	``Distributed fusion of multitarget densities and consensus PHD/CPHD
	filters,'' in \emph{Signal Processing, Sensor/Information Fusion, and Target
		Recognition XXIV}, vol. 9474.\hskip 1em plus 0.5em minus 0.4em\relax
	International Society for Optics and Photonics, 2015, p. 94740E.
	
	\bibitem{battistelli2015consensus}
	G.~Battistelli, L.~Chisci, G.~Mugnai, A.~Farina, and A.~Graziano,
	``Consensus-based linear and nonlinear filtering,'' \emph{IEEE Transactions
		on Automatic Control}, vol.~60, no.~5, pp. 1410--1415, 2015.
	
	\bibitem{battistelli2016stability}
	G.~Battistelli and L.~Chisci, ``Stability of consensus extended Kalman filter
	for distributed state estimation,'' \emph{Automatica}, vol.~68, pp. 169--178,
	2016.
	
	\bibitem{uney2019fusion}
	M.~{\"U}ney, J.~Houssineau, E.~Delande, S.~J. Julier, and D.~Clark, ``Fusion of
	finite set distributions: pointwise consistency and global cardinality,''
	\emph{IEEE Transactions on Aerospace and Electronic Systems}, 2019.
	
	\bibitem{panta2009data}
	K.~Panta, D.~E. Clark, and B.-N. Vo, ``Data association and track management
	for the Gaussian mixture probability hypothesis density filter,'' \emph{IEEE
		Transactions on Aerospace and Electronic Systems}, vol.~45, no.~3, pp.
	1003--1016, 2009.
	
	\bibitem{papi2015generalized}
	F.~Papi, B.-N. Vo, B.-T. Vo, C.~Fantacci, and M.~Beard, ``Generalized labeled
	multi-Bernoulli approximation of multi-object densities,'' \emph{IEEE
		Transactions on Signal Processing}, vol.~63, no.~20, pp. 5487--5497, 2015.
	
	\bibitem{vo2013labeled}
	B.-T. Vo and B.-N. Vo, ``Labeled random finite sets and multi-object conjugate
	priors,'' \emph{IEEE Transactions on Signal Processing}, vol.~61, no.~13, pp.
	3460--3475, 2013.
	
	\bibitem{vo2014labeled}
	B.-N. Vo, B.-T. Vo, and D.~Phung, ``Labeled random finite sets and the Bayes
	multi-target tracking filter,'' \emph{IEEE Transactions on Signal
		Processing}, vol.~62, no.~24, pp. 6554--6567, 2014.
	
	\bibitem{fantacci2015marginalized}
	C.~Fantacci, B.-T. Vo, F.~Papi, and B.-N. Vo, ``The marginalized
	$\delta$-{GLMB} filter,'' \emph{arXiv preprint arXiv:1501.00926}, 2015.
	
	\bibitem{reuter2014labeled}
	S.~Reuter, B.-T. Vo, B.-N. Vo, and K.~Dietmayer, ``The labeled multi-Bernoulli
	filter,'' \emph{IEEE Transactions on Signal Processing}, vol.~62, no.~12, pp.
	3246--3260, 2014.
	
	\bibitem{vo2006gaussian}
	B.-N. Vo and W.-K. Ma, ``The Gaussian mixture probability hypothesis density
	filter,'' \emph{IEEE Transactions on Signal Processing}, vol.~54, no.~11, pp.
	4091--4104, 2006.
	
	\bibitem{battistelli2017random}
	G.~Battistelli, L.~Chisci, and A.~Laurenzi, ``Random set approach to
	distributed multivehicle SLAM,'' \emph{IFAC-PapersOnLine}, vol.~50, no.~1,
	pp. 2457--2464, 2017.
	
	\bibitem{vasic2016system}
	M.~Vasic, D.~Mansolino, and A.~Martinoli, ``A system implementation and
	evaluation of a cooperative fusion and tracking algorithm based on a Gaussian
	mixture PHD filter,'' in \emph{2016 IEEE/RSJ International Conference on
		Intelligent Robots and Systems (IROS)}.\hskip 1em plus 0.5em minus
	0.4em\relax Ieee, 2016, pp. 4172--4179.
	
	\bibitem{li2019distributed}
	G.~Li, G.~Battistelli, W.~Yi, and L.~Kong, ``Distributed multi-sensor
	multi-view fusion based on generalized covariance intersection,''
	\emph{arXiv:1903.06985}, 2019.
	
	\bibitem{wang2018centralized}
	X.~Wang, A.~K. Gostar, T.~Rathnayake, B.~Xu, A.~Bab-Hadiashar, and
	R.~Hoseinnezhad, ``Centralized multiple-view sensor fusion using labeled
	multi-Bernoulli filters,'' \emph{Signal Processing}, vol. 150, pp. 75--84,
	2018.
	
	\bibitem{li2018multi}
	S.~Li, G.~Battistelli, L.~Chisci, W.~Yi, B.~Wang, and L.~Kong, ``Multi-sensor
	multi-object tracking with different fields-of-view using the LMB filter,''
	in \emph{2018 21st International Conference on Information Fusion
		(FUSION)}.\hskip 1em plus 0.5em minus 0.4em\relax IEEE, 2018, pp. 1201--1208.
	
	\bibitem{mahler2014advances}
	R.~P. Mahler, \emph{Advances in statistical multisource-multitarget information
		fusion}.\hskip 1em plus 0.5em minus 0.4em\relax Artech House, 2014.
	
	\bibitem{kullback1997information}
	S.~Kullback, \emph{Information Theory and Statistics}.\hskip 1em plus 0.5em
	minus 0.4em\relax Dover, 1997.
	
	\bibitem{beard2018solution}
	M.~Beard, B.~T. Vo, and B.-N. Vo, ``A solution for large-scale multi-object
	tracking,'' \emph{arXiv preprint arXiv:1804.06622}, 2018.
	
	\bibitem{vo2005sequential}
	B.-N. Vo, S.~Singh, and A.~Doucet, ``Sequential Monte Carlo methods for
	multitarget filtering with random finite sets,'' \emph{IEEE Transactions on
		Aerospace and Electronic Systems}, vol.~41, no.~4, pp. 1224--1245, 2005.
	
	\bibitem{gunay2016chernoff}
	M.~Gunay, U.~Orguner, and M.~Demirekler, ``Chernoff fusion of Gaussian mixtures
	based on sigma-point approximation,'' \emph{IEEE Transactions on Aerospace
		and Electronic Systems}, vol.~52, no.~6, pp. 2732--2746, 2016.
	
	\bibitem{li2018local}
	T.~Li, V.~Elvira, H.~Fan, and J.~M. Corchado, ``Local-diffusion-based
	distributed SMC-PHD filtering using sensors with limited sensing range,''
	\emph{IEEE Sensors Journal}, vol.~19, no.~4, pp. 1580--1589, 2018.
	
	\bibitem{ristic2012adaptive}
	B.~Ristic, D.~Clark, B.-N. Vo, and B.-T. Vo, ``Adaptive target birth intensity
	for PHD and CPHD filters,'' \emph{IEEE Transactions on Aerospace and
		Electronic Systems}, vol.~48, no.~2, pp. 1656--1668, 2012.
	
	\bibitem{garcia2016track}
	A.~F. Garcia-Fernandez, ``Track-before-detect labeled multi-Bernoulli particle
	filter with label switching,'' \emph{IEEE Transactions on Aerospace and
		Electronic Systems}, vol.~52, no.~5, pp. 2123--2138, 2016.
	
	\bibitem{kuhn1955hungarian}
	H.~W. Kuhn, ``The Hungarian method for the assignment problem,'' \emph{Naval
		Research Logistics Quarterly}, vol.~2, no. 1-2, pp. 83--97, 1955.
	
	\bibitem{nielsen2011chernoff}
	F.~Nielsen, ``Chernoff information of exponential families,''
	\emph{arXiv:1102.2684}, 2011.
	
	\bibitem{gao2018event}
	L.~Gao, G.~Battistelli, L.~Chisci, and P.~Wei, ``Event-triggered consensus
	Bernoulli filtering,'' in \emph{21st International Conference on Information
		Fusion (FUSION)}.\hskip 1em plus 0.5em minus 0.4em\relax IEEE, 2018, pp.
	84--91.
	
	\bibitem{xiao2007distributed}
	L.~Xiao, S.~Boyd, and S.-J. Kim, ``Distributed average consensus with
	least-mean-square deviation,'' \emph{Journal of Parallel and Distributed
		Computing}, vol.~67, no.~1, pp. 33--46, 2007.
	
	\bibitem{xiao2005scheme}
	L.~Xiao, S.~Boyd, and S.~Lall, ``A scheme for robust distributed sensor fusion
	based on average consensus,'' in \emph{IPSN 2005. Fourth International
		Symposium on Information Processing in Sensor Networks, 2005.}\hskip 1em plus
	0.5em minus 0.4em\relax IEEE, 2005, pp. 63--70.
	
	\bibitem{fantacci2015consensus}
	C.~Fantacci, B.-N. Vo, B.-T. Vo, G.~Battistelli, and L.~Chisci, ``Consensus
	labeled random finite set filtering for distributed multi-object tracking,''
	\emph{arXiv:1501.01579}, 2015.
	
	\bibitem{mukhopadhyay2010advances}
	S.~C. Mukhopadhyay and H.~Leung, \emph{Advances in wireless sensors and sensor
		networks}.\hskip 1em plus 0.5em minus 0.4em\relax Springer, 2010, vol.~64.
	
	\bibitem{schuhmacher2008consistent}
	D.~Schuhmacher, B.-T. Vo, and B.-N. Vo, ``A consistent metric for performance
	evaluation of multi-object filters,'' \emph{IEEE Transactions on Signal
		Processing}, vol.~56, no.~8, pp. 3447--3457, 2008.
	
	\bibitem{do2003fast}
	M.~N. Do, ``Fast approximation of Kullback-Leibler distance for dependence
	trees and hidden Markov models,'' \emph{IEEE Signal Processing Letters},
	vol.~10, no.~4, pp. 115--118, 2003.
	
\end{thebibliography}

% Generated by IEEEtran.bst, version: 1.14 (2015/08/26)

% that's all folks
\end{document}